\newcommand\figcaption{\def\@captype{figure}\caption}%
\newcommand\tabcaption{\def\@captype{table}\caption}%
\newtheorem{theorem}{\hskip\parindent Theorem}
\newtheorem{lemma}{\hskip\parindent Lemma}
\newtheorem{remark}{\hskip\parindent Remark}
\newtheorem{definition}{\hskip\parindent Definition}
\newtheorem{corollary}{\hskip\parindent Corollary}
\newcounter{note}[section]
\renewcommand{\thenote}{\thesection.\arabic{note}}
\newcommand{\li}[1]{\refstepcounter{note}$\ll${\sf Li's
Comment~\thenote:} {\sf \textcolor{blue}{#1}}$\gg$}
\newcommand{\ignore}[1]{}
\begin{document}

\title{Uniform Information Exchange in Multi-channel Wireless Ad Hoc Networks}

\author{Li Ning \\ 
Center for High Performance Computing\\
Shenzhen Institutes of Advanced Technology, CAS\\
Shenzhen, P.R. China\\
li.ning@siat.ac.cn
\and 
Dongxiao Yu \\
Department of Computer Science\\
The University of Hong Kong\\
Hong Kong, P.R. China\\
dxyu@cs.hku.hk
\and
Yong Zhang\\
Center for High Performance Computing\\
Shenzhen Institutes of Advanced Technology, CAS\\
Shenzhen, P.R. China\\
zhangyong@siat.ac.cn
\and 
Yuexuan Wang\\
Department of Computer Science\\
The University of Hong Kong\\
Hong Kong, P.R. China\\
amywang@hku.hk
\and
Francis C.M. Lau\\
Department of Computer Science\\
The University of Hong Kong\\
Hong Kong, P.R. China\\
fcmlau@cs.hku.hk
\and
Shengzhong Feng\\
Center for High Performance Computing\\
Shenzhen Institutes of Advanced Technology, CAS\\
Shenzhen, P.R. China\\
sz.feng@siat.ac.cn
}

\date{\today}

\maketitle
\begin{abstract}
In the information exchange problem, 
$k$ packets that are initially maintained by $k$ nodes need
to be disseminated to the whole network as quickly as possible. 
We consider this problem in single-hop multi-channel networks of $n$ nodes, 
and propose a uniform protocol that with high probability accomplishes the dissemination 
in $O(k/\mathcal{F} + \mathcal{F}\cdot \log n)$ rounds, assuming $\mathcal{F}$ 
available channels and collision detection. This result is asymptotically optimal
when $k$ is large ($k \geq \mathcal{F}^2 \cdot \log n$). To our knowledge, this 
is the first uniform protocol for information exchange in 
multi-channel networks.
\end{abstract}

\section{Introduction}

In this paper, we study the information exchange problem
in a single-hop, multi-channel radio network.
There are $k$ nodes, called the \emph{source nodes}, in the network.
At the beginning, each of them holds a packet, and the
target is to disseminate these packets to the whole network as quickly
as possible.
Information exchange is one of the most fundamental operations that are
frequently called for in the smooth running of a network.


Using multiple channels obviously can greatly increase the throughput
of the network. A lot of works have been devoted 
to studying the utilization of multiple channels
in the derivation of faster communication protocols (e.g. \cite{CMK14,DKN12,DaumGGKN13,DGGN07,DGGN08,GilbertGKN09,GGNT12,HolzerLPW12,HolzerPSW11,ShiHYWL12,Wang2014}). 
All existing works however require that the network size $n$ 
be known a prior.
In ad hoc networks, knowing $n$ is usually a tough task, as it would consume
a large amount of time and energy for nodes to compute this global
parameter, and hence greatly increase the load of the network.
Additionally, in ad hoc networks, the network size could change
frequently due to nodes leaving and joining.
This consideration necessitates the design of uniform
protocols which do not require any prior information about network
parameters including the network size $n$ and the number of source
nodes $k$. 
Uniform protocols have better scalability and therefore
more suitable for implementation in reality.
%
In this paper, we propose a uniform protocol for information exchange
whose time complexity decreases linearly as the number of available
channels increases.

\subsection{Network Model and Problem Definition}
A multi-channel single-hop network is defined as follows. There are
$n$ nodes in the network, and any pair of which can communicate with
each other directly. But $n$ is not known to nodes. Time is divided into
synchronous rounds. There are $\mathcal{F}$ channels available in the
network. We use $1,\ldots,\mathcal{F}$ to denote these channels.
Even though these $\mathcal{F}$ channels are available to all the nodes, at
any time a node can select at most one channel to listen to or transmit
on. A node operating on a channel in a given round learns nothing about
events on the other channels. When a node $v$ listens to a channel, it
can receive a message if and only if there is only one node transmitting
on the channel. If two or more nodes transmit on the same channel, a
collision occurs and none of these transmissions would be successful.
We assume that nodes can detect collisions, i.e., nodes can distinguish
collision from silence. Furthermore, we consider the case of
non-constant $\mathcal{F}$ (larger than any constant), since otherwise,
using a constant number of channels will not break the $\Omega(k)$ lower
bound for information exchange that always holds in single-channel
networks.

The algorithm proposed in this paper is randomized, and hence the
analysis involves many random events.
We say that an event happens with high probability (with respect to $n$), if
it happens with probability $1 - 1/n^c$ for some constant $c > 0$.
%


The goal of
of information exchange is to disseminate some source nodes' packets
to the whole network, which is more precisely defined as follows.

\begin{definition}\label{def:info}
\emph{(Information Exchange.)}
In the information exchange problem, initially $k$ source
nodes are holding packets $\{P_1, P_2, \ldots, P_{k}\}$ respectively. 
It is required to disseminate all these $k$ packets to the whole network
as quickly as possible.
\end{definition}
Denote by $K$ the set of source nodes. Then $|K|=k$. We study the harsh
case of the information exchange problem where nodes have no idea about the
number of packets $k$ and the set of source nodes $K$. 
We asume that multiple packets can
be packed in a single message. 
It is easy to see that if
$k$ is small relative to the number of channels $\mathcal{F}$, the
benefit of multiple channels will be weakened, since in this case there
could be a single node selecting a channel such that
its transmission cannot be received by anyone. Thus, throughout
this work, we assume that $k\geq \mathcal{F}\log n$, which ensures that
when nodes uniformly select the channels, there are multiple nodes
operating on each channel with high probability. However, we must
point out that our algorithm can also solve the case where $k$ is small.



\subsection{Our Result and Technique}
In this paper, we give the first known uniform protocol for information
exchange in multi-channel networks. Our algorithm can disseminate all
$k$ packets to the whole network in $O(k/\mathcal{F} + \mathcal{F} \cdot
\log n)$ rounds with high probability when there are $\mathcal{F}$
available channels. When $k$ is large ($k\geq \mathcal{F}^2\log n$), our
algorithm shows a linear speedup considering the $\Omega(k)$ lower bound
for single-channel networks. Note that $\Omega(k/\mathcal{F})$ is a
trivial lower bound for information exchange with $\mathcal{F}$
available channels. Hence, our protocol is asymptotically optimal when
$k$ is large.



In our protocol, every node that needs to transmit maintains a transmission probability,
and in each round, a node decides to transmit with its transmission probability. 
With $\mathcal{F}$ available channels, our protocol applies a very
intuitive rule for the nodes to do the selection: in each round, a node
just selects one channel uniformly at random, and then transmits or listens on the selected
channel. If a node listens on the channel and detected that the selected channel is \emph{idle}, then it doubles its transmission probability. Otherwise, the node halves the transmission probability.

By a straightforward computation, it is easy to discover that in order
to ensure a successful transmission on one channel with constant
probability,
the total transmission probability of nodes selecting this channel
should be a constant. Hence, to efficiently make use of the channels,
the total transmission probability of nodes should be in a ``safe
range'' $[\alpha_1\cdot \mathcal{F}, \alpha_2\cdot \mathcal{F}]$ with
constants $\alpha_1, \alpha_2 > 0$.
However, this is not easy to achieve without a careful design of the protocol.
The difficulty comes from the selection process of the channels. Since the nodes 
select the channels using a distributed, randomized protocol and the selections of nodes 
are mutually independent, the total transmission probability of nodes selecting a 
channel may vary a lot among different channels. As a result, nodes may
update their transmission probabilities towards different directions,
which makes it very hard to analyze whether the ``safe range'' is still
guaranteed after an update. Our protocol has a
channel-consistent-updating property, i.e., nodes selecting the same
channel update their transmission probabilities consistently. With this
channel-consistent-updating property, our analysis shows that
when the total transmission probability of all the nodes goes outside
the ``safe range'',
there are enough channels on which the nodes behave consistently to
pull the total transmission probability back
to the safe range. Then as a global effect, the network stays stable in the ``safe range'' state,
and the $\mathcal{F}$ available channels are used efficiently (i.e.\ transmit successfully
$\Omega(\mathcal{F})$ messages with constant probability in each round).

Notice that if we can quickly aggregate all the $k$ packets to a single node and 
let this node be the only one to transmit in the network (by broadcasting 
in a pre-defined primary channel), then the whole network will know all the $k$ packets very soon. 
Our protocol follows this approach, and hence prevents the node from trying to transmit anymore
if its message has been successfully received by some node which is still transmitting. 
This technique is also known as \emph{indirecting}, and can be summarized as ``if your
message is received by another speaker, then you never speak again''. By this approach, 
the nodes that try to transmit become fewer and fewer as the protocol is
running.

When the number of transmitting nodes becomes small, using all $\mathcal{F} > 1$ channels
does not always help. The reason is when there are only a few nodes to transmit, it is hard for them 
to meet each other if they still select the channel randomly. A direct solution 
will be such that if the transmitting nodes find that there are only a few left nodes, 
then they stop selecting channels and operate on a pre-defined channel.
The primary channel that is designed for the final broadcast can be used to achieve this.
However, the problem has not been completely solved.
By our analysis, we know that 
the number of rounds needed before the multiple channels become inefficient is 
$T = O(\log n + k/\mathcal{F})$ with high probability. Unfortunately,
this time bound cannot be known
to the nodes in the network, since the protocol is uniform and
information about $n$ and $k$ is not
available to the nodes. Consequently, the nodes cannot calculate $T$
and have no idea about
when to stop selecting channels. Our algorithm uses a parallel approach to overcome this difficulty. 
In our protocol, each round is divided into four slots: in the first two
slots, nodes use multiple channels for transmissions; 
and in the last two slots, all nodes only use the pre-defined primary
channel for transmission.
This parallel approach affects the running time by only a constant factor, but perfectly solves the inefficiency problem of multiple channels for information exchange with small number of transmitting nodes. Our analysis shows that such a parallel approach completes the information exchange in 
$O(k/\mathcal{F} + \mathcal{F} \cdot \log n)$ rounds with high probability.


\subsection{Related Work}
As more and more wireless networks and devices now operate on multiple
channels, there has been much attention given to studying the effect of
multiple channels on facilitating communication recently
\cite{CMK14,DKN12,DaumGGKN13,DGGN07,DGGN08,GilbertGKN09,GGNT12,HolzerLPW12,HolzerPSW11,ShiHYWL12,Wang2014}.
With respect to information exchange in multi-channel single-hop
networks, most studies are done under the assumption that each message
can carry only one packet. In particular, Holzer et al.\
\cite{HolzerPSW11,HolzerLPW12} proposed deterministic and randomized
algorithms with optimal $O(k)$ time to solve the information exchange
problem. With the assumption that nodes can listen to and
receive messages from multiple channels at the same time, Shi et al.\ \cite{ShiHYWL12} proposed
an $O(\log k \log \log k)$ time randomized information exchange protocol
using $\Theta(n)$ channels. But with the assumption of unit-size messages,
the benefit of utilizing multiple channels is very limited, since in
each round, a node can receive at most one packet.
Hence, it needs $\Omega(k)$ rounds to complete the information
exchange. On the other hand, the packet stored at nodes could be small
(e.g., in sensor networks, the data at each node is only a value). It is
realistic to consider the case that multiple packets can be packed in a
single message. Under this assumption, in \cite{DaumGGKN13}, Daum et al.
proposed a randomized algorithm that accomplishes
information exchange in $O(k + \log^2 n/\mathcal{F} +\log n\log \log n)$ rounds with high probability. Their algorithm does not rely on collision detection. Then with collision detection, Wang et al.\ \cite{Wang2014} proposed a protocol that 
disseminates all the packets in $O(k/\mathcal{F} + \mathcal{F} \cdot \log^2 n)$ rounds
with high probability. When $k$ is large ($k\geq \mathcal{F}^2\log^2n$), this result is asymptotically optimal considering the trivial lower bound $\Omega(k/\mathcal{F})$. In~\cite{YYWYL14}, Yan et al. studied the impact of message size on information exchange in multi-channel networks. Additionally, Gilbert et al.\ \cite{GilbertGKN09} considered the scenario when an adversary can disrupt a number of channels 
and proposed a randomized algorithm to achieve the almost-complete
information exchange. However, all the above results need the prior knowledge
of $n$. To our knowledge, there is not yet any uniform protocol proposed
for solving the information exchange problem
in single-hop multiple-channel networks.

Information exchange has also been extensively studied
since 1970s \cite{Capetanakis79,Hayes1978,Mikhailov2005} in single-channel networks. In single-channel networks, information exchange is also known
as contention resolution \cite{Fernandez2013} or $k$-selection \cite{Kowalski2005}. 
Assuming collision detection as in this work, 
a randomized adaptive protocol with expected running time of $O(k + \log n)$ was
presented by Martel in \cite{Martel1994}. Kowalski \cite{Kowalski2005} improved the protocol in \cite{Martel1994} to
$O(k +\log \log n)$ by making use of the expected $O(\log \log n)$ selection
protocol in \cite{Willard1986}. When requiring high probability results, the best known 
randomized algorithm was introduced in \cite{AntaM10}, which solves the $k$-selection 
problem in $O(k +\log^2 n)$ rounds without assuming collision detection. Note that
in the single-channel networks, the trivial lower bound for $k$-selection is $\Omega(k)$. 
Hence the result in \cite{AntaM10} is asymptotically optimal for $k\in \Omega(\log^2 n)$. By assuming that the channel can provide feedback on whether a message is successfully transmitted,
an uniform randomized protocol with running time $O(k)$ is introduced in \cite{Fernandez2013} for single-channel networks.
However, the error probability of the protocol in \cite{Fernandez2013} is $1/k^c$, rather than $1/n^c$. 
For deterministic solutions, adaptive protocols for $k$-selection were presented with running 
time $O(k \log (n/k))$ in \cite{Capetanakis79,Hayes1978,Mikhailov2005}, assuming collision detection. 

%
%
%

\subsection{Outline}
Section~\ref{sec:pre} introduces some preliminary results that help
the analysis. Section~\ref{sec:protocol} introduces our protocol. 
Section~\ref{sec:analysis} analyzes the performance of our protocol;
particularly, we give an upper 
bound on the time needed to accomplish (with high probability)
information exchange.
Furthermore, we show the ``self-stabilization'' property of our protocol.
Section~\ref{sec:con} summarizes our work, followed by a discussion.  

\section{Preliminaries}\label{sec:pre}
\ignore{
Recall that the number of available channels is denoted by $\mathcal{F}$. We assume that $\mathcal{F}$ can be
very large, since without this assumption using more channels does not decrease the time complexity
bound for the algorithms solving the information exchange problem. To see why, 
consider the case when there is a constant $\Gamma > 0$ such that $\mathcal{F}$ is always smaller than $\Gamma > 0$.
Then any protocol that uses $\mathcal{F}$ channels needs $k/\mathcal{F} = \Omega(k)$ time rounds
to solve the information exchange problem, which is at the same time complexity order with the lower bound
for the case when only one channel is available.

Furthermore, another practical assumption should be also noticed. That is, even though $\mathcal{F}$ can be 
very large (not bounded by any constant), it always hold that the network size $n$ is still much bigger
than $\mathcal{F}$.
}
In this section, we review some useful results concerning randomness.

\begin{lemma}[Chernoff Bound.]\label{lem:chernoff}
Consider a set of random variables $0 \leq X_1$, $X_2$, \ldots, $X_n \leq c$ for some parameter $c > 0$. 
Let $X := \sum^n_{i=1} X_i$ 
and $\mu := \mathbb{E}[X]$. If $X_i$'s are independent or negatively associated, 
then for any $\delta > 0$ it holds that 
\[ Pr[X \geq (1 + \delta)\mu] \leq \left(\frac{e^\delta}{(1 + \delta)^{1 + \delta}}\right)^{\frac{\mu}{c}}.\]
In details, for $\delta \leq 1$, the bound can be upper bounded by 
\[Pr[X \geq (1 + \delta)\mu] \leq \exp\{-\frac{\delta^2\mu}{3c}\};\]
for $\delta > 1$, it holds that 
\[Pr[X \geq (1+\delta)\mu] \leq \exp\{-\frac{\delta\ln(1+\delta)\mu}{2c}\}.\] 
On the other hand, for any $0 < \delta < 1$ it holds that 
\[Pr[X \leq (1 - \delta)\mu] \leq 
\left(\frac{e^{-\delta}}{(1-\delta)^{1-\delta}}\right)^{\frac{\mu}{c}} \leq \exp\{-\frac{\delta^2\mu}{c}\}.\]
\end{lemma}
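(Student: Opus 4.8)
The plan is to establish all four inequalities by the standard exponential‑moment (Bernstein--Chernoff) argument. First I would reduce to the case $c=1$: replacing each $X_i$ by $X_i/c$ and $X$ by $X/c$ leaves the events $\{X\geq(1+\delta)\mu\}$ and $\{X\leq(1-\delta)\mu\}$ unchanged, rescales the common range to $[0,1]$, and rescales the mean to $m:=\mu/c$; so it suffices to prove everything for $[0,1]$‑valued variables with mean $m$ and then read off the stated exponent $\mu/c$. For the upper tail, fix $t>0$ and apply Markov's inequality to the nonnegative variable $e^{tX}$:
\[
\Pr[X\geq(1+\delta)m]=\Pr\!\big[e^{tX}\geq e^{t(1+\delta)m}\big]\leq e^{-t(1+\delta)m}\,\mathbb{E}\big[e^{tX}\big].
\]

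The next step is to bound the moment generating function $\mathbb{E}[e^{tX}]=\mathbb{E}\big[\prod_{i=1}^{n}e^{tX_i}\big]$. If the $X_i$ are independent this equals $\prod_i\mathbb{E}[e^{tX_i}]$; if they are merely negatively associated, the inequality $\mathbb{E}\big[\prod_i e^{tX_i}\big]\leq\prod_i\mathbb{E}[e^{tX_i}]$ still holds, because $x\mapsto e^{tx}$ is nondecreasing, negative association is preserved under applying a monotone function in each coordinate, and one can then peel the factors off one at a time using the defining inequality of negative association for disjoint coordinate blocks. Then, using convexity of the exponential, on $[0,1]$ the graph of $e^{tx}$ lies below the chord through $(0,1)$ and $(1,e^{t})$, so $e^{tx}\leq 1+(e^{t}-1)x$ for $x\in[0,1]$; taking expectations and using $1+y\leq e^{y}$ gives $\mathbb{E}[e^{tX_i}]\leq 1+(e^{t}-1)\mathbb{E}[X_i]\leq\exp\{(e^{t}-1)\mathbb{E}[X_i]\}$. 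Multiplying over $i$ yields $\mathbb{E}[e^{tX}]\leq\exp\{(e^{t}-1)m\}$, hence
\[
\Pr[X\geq(1+\delta)m]\leq\exp\big\{(e^{t}-1)m-t(1+\delta)m\big\}=\Big(\exp\{e^{t}-1-t(1+\delta)\}\Big)^{m}.
\]

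Now I would optimize the exponent over $t>0$. The map $t\mapsto e^{t}-1-t(1+\delta)$ is convex and minimized at $t=\ln(1+\delta)>0$, where it equals $\delta-(1+\delta)\ln(1+\delta)$; substituting and undoing the rescaling gives the claimed bound $\Pr[X\geq(1+\delta)\mu]\leq\big(e^{\delta}/(1+\delta)^{1+\delta}\big)^{\mu/c}$. The lower tail is entirely symmetric: for $s>0$ apply Markov to $e^{-sX}$; since $x\mapsto e^{-sx}$ is nonincreasing the same negative‑association product bound applies, the chord bound $e^{-sx}\leq 1+(e^{-s}-1)x$ on $[0,1]$ gives $\mathbb{E}[e^{-sX}]\leq\exp\{(e^{-s}-1)m\}$, and optimizing at $s=-\ln(1-\delta)>0$ produces $\Pr[X\leq(1-\delta)\mu]\leq\big(e^{-\delta}/(1-\delta)^{1-\delta}\big)^{\mu/c}$.

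Finally, the simplified exponential forms are one‑variable calculus estimates of $\psi(\delta):=\delta-(1+\delta)\ln(1+\delta)$ and $\chi(\delta):=-\delta-(1-\delta)\ln(1-\delta)$: one checks $\psi(\delta)\leq-\delta^{2}/3$ for $0<\delta\leq1$, $\psi(\delta)\leq-\tfrac12\,\delta\ln(1+\delta)$ for $\delta>1$, and the corresponding estimate for $\chi(\delta)$ giving the stated lower‑tail form, in each case by noting that the relevant difference vanishes at $\delta=0$ and has the correct sign of derivative on the stated interval (equivalently, via the series $\psi(\delta)=-\tfrac12\delta^{2}+\tfrac16\delta^{3}-\cdots$ and $\chi(\delta)=-\tfrac12\delta^{2}-\tfrac16\delta^{3}-\cdots$). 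I expect the only step requiring genuine care, the rest being textbook, to be the negative‑association reduction: one must invoke explicitly that the class of negatively associated families is closed under coordinatewise monotone transformations, and then split the product $\prod_i e^{\pm tX_i}$ apart factor by factor, since the bare definition of negative association only compares one block of coordinates against a disjoint one.
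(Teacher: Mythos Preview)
The paper does not prove this lemma at all; it is stated in the preliminaries as a standard result and immediately followed by the balls-into-bins lemmas, with no proof or even a reference for the Chernoff bound itself. Your exponential--moment argument is the textbook derivation and is correct for the main upper- and lower-tail bounds $\big(e^{\delta}/(1+\delta)^{1+\delta}\big)^{\mu/c}$ and $\big(e^{-\delta}/(1-\delta)^{1-\delta}\big)^{\mu/c}$, as well as for the two simplified upper-tail forms.

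One caution concerning the very last inequality you gloss over with ``the corresponding estimate for $\chi(\delta)$ giving the stated lower-tail form'': the lemma as written asserts the simplification $\exp\{-\delta^{2}\mu/c\}$, i.e.\ $\chi(\delta)\le -\delta^{2}$, but your own series expansion $\chi(\delta)=-\tfrac12\delta^{2}-\tfrac16\delta^{3}-\cdots$ only delivers $\chi(\delta)\le -\delta^{2}/2$. Indeed at $\delta=1/2$ one computes $\chi(1/2)=-\tfrac12-\tfrac12\ln\tfrac12\approx -0.153$, which is strictly larger than $-\delta^{2}=-0.25$. So that final constant in the lemma's statement is off by a factor of $2$, and your sketch cannot actually recover it as written; this is a slip in the statement rather than a flaw in your method, but you should not claim to have established that last inequality.
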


Next, we present some useful conclusions about a classic procedure, ``throw balls into bins''.
These conclusions have essentially been proved in existing works such as \cite{Daum2012a}. However, 
for the completeness of our arguments, we will go through the proof in
details. 

\begin{lemma}\label{lem:bin_ball}
Consider $H$ bins and $l$ balls with weights $0 \leq w_1$, $w_2$, \ldots, $w_l \leq \zeta$.
Assume that $\sum_{i=1}^l w_i = \alpha \cdot H$ where $\alpha \geq 0.01$ is a constant. 
Balls are thrown into bins uniformly at random. Then, if $\zeta$ is small enough, 
with probability $1 - \exp\{-\Omega(H)\}$
there are at least $H \cdot 31/32$ bins in which the total weight of balls is between $\alpha \cdot 15/16$ 
and $\alpha \cdot 2$.
\end{lemma}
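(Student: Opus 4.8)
The plan is to analyze each bin separately via a Chernoff bound on the total weight landing in it, then use a union bound over the $H$ bins together with a counting argument. Fix a bin $b$. For ball $i$, let $X_i$ be the weight $w_i$ contributed to bin $b$ if ball $i$ lands there and $0$ otherwise; these are independent, lie in $[0,\zeta]$, and $\mathbb{E}\!\left[\sum_i X_i\right] = \frac{1}{H}\sum_i w_i = \alpha$. Applying Lemma~\ref{lem:chernoff} with $c = \zeta$ and suitable constants $\delta$ (namely $\delta = 1$ for the upper tail and $\delta = 1/16$ for the lower tail), the probability that the total weight in bin $b$ falls outside $[\alpha \cdot 15/16,\; \alpha \cdot 2]$ is at most $\exp\{-c'\alpha/\zeta\}$ for an absolute constant $c' > 0$. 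Call bin $b$ \emph{bad} if this happens. Since $\alpha \geq 0.01$ is a constant, choosing $\zeta$ small enough makes $p := \exp\{-c'\alpha/\zeta\}$ an arbitrarily small constant; in particular we can force $p \leq 1/64$.

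Next I would bound the number of bad bins. Let $B = \sum_b \mathbf{1}[\text{bin } b \text{ bad}]$, so $\mathbb{E}[B] \leq pH \leq H/64$. The subtlety is that the events ``bin $b$ is bad'' across different bins $b$ are \emph{not} independent (one ball's placement affects several bins jointly). The cleanest fix is to observe that the indicator of each bin being ``overloaded'' (weight $> 2\alpha$) is a monotone function of the placement vector, and similarly for ``underloaded''; placements of distinct balls into bins induce negative association among the relevant indicator families, so the $\mathbf{1}[\text{bin } b \text{ bad}]$ are negatively associated. Hence Lemma~\ref{lem:chernoff} applies to $B$ as well: with $c = 1$, $\mu = \mathbb{E}[B] \leq H/64$, and $\delta$ chosen so that $(1+\delta)\mu = H/32$, we get $\Pr[B \geq H/32] \leq \exp\{-\Omega(H)\}$. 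On the complementary event there are at most $H/32$ bad bins, i.e. at least $H \cdot 31/32$ good bins, each of which by definition has total weight in $[\alpha \cdot 15/16,\; \alpha \cdot 2]$. This is exactly the claimed conclusion.

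The main obstacle is the dependence issue in the second step. If one is uneasy invoking negative association, an alternative is a direct second-moment or bounded-differences (McDiarmid) argument: moving a single ball changes $B$ by at most a constant (it affects only the two bins it leaves and enters), so $B$ concentrates around its mean with Gaussian tails of width $O(\sqrt{H})$, which again yields $\Pr[B \geq H/32] \leq \exp\{-\Omega(H)\}$ once $\mathbb{E}[B] \leq H/64$. Either route closes the proof; I would present the negative-association version since the paper has already set up Lemma~\ref{lem:chernoff} to accept negatively associated variables, making it the least additional machinery. A minor point to get right is making the two constant thresholds ($15/16$ on the low side, $2$ on the high side, $31/32$ on the fraction of good bins, and the implicit smallness of $\zeta$) mutually consistent, but these are just choices of absolute constants and involve no real difficulty.
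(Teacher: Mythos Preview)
Your overall architecture matches the paper's proof: a per-bin Chernoff bound on the landed weight, followed by a concentration argument (via negative association and Lemma~\ref{lem:chernoff}) on the number of ``good'' bins. The single-bin step is fine and essentially identical to the paper's.

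The gap is in your second step. You combine the two tails into a single ``bad'' indicator $\mathbf{1}[L_b>2\alpha\ \text{or}\ L_b<\tfrac{15}{16}\alpha]$ and assert these are negatively associated across bins. But this indicator is \emph{not} monotone in the bin load $L_b$, so the standard closure property (monotone functions of disjoint coordinates of an NA vector are NA) does not apply. Worse, the intuition actually cuts the wrong way: if one bin is overloaded, the remaining bins share less total weight and are therefore \emph{more} likely to be underloaded, so two ``bad'' events can be positively correlated. Your observation that the overloaded indicators are monotone and the underloaded indicators are monotone is exactly right, but the ``so'' that jumps from there to NA of their disjunction is not valid.

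Your fallback via McDiarmid does not rescue this. The bounded-differences inequality is applied to the placement vector, which has $l$ coordinates, each with sensitivity at most $2$; hence the tail width is $O(\sqrt{l})$, not $O(\sqrt{H})$. Since nothing in the hypothesis bounds $l$ in terms of $H$ (the weights can be arbitrarily small), this yields only $\exp\{-\Omega(H^2/l)\}$, not $\exp\{-\Omega(H)\}$.

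The fix is minimal and is precisely what the paper does: keep the two tails separate. Define $Y_j=\mathbf{1}[L_j\ge \tfrac{15}{16}\alpha]$ (increasing in $L_j$) and $Z_j=\mathbf{1}[L_j\le 2\alpha]$ (decreasing in $L_j$). Each family, being coordinatewise monotone in the NA load vector $(L_1,\ldots,L_H)$, is itself NA, so Lemma~\ref{lem:chernoff} gives $\sum_j Y_j\ge \tfrac{63}{64}H$ and $\sum_j Z_j\ge \tfrac{63}{64}H$, each with probability $1-\exp\{-\Omega(H)\}$. Intersecting via a union bound leaves at least $\tfrac{31}{32}H$ bins satisfying both, which is the desired conclusion.
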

\begin{proof}
Considering the $j$-th bin, let $X^i_j$ denote the random variable that
takes value $w_i$
if the $i$-th ball is in the $j$-th bin, and $0$ otherwise. 
Since the balls are thrown to bins uniformly at random, $Pr[X^i_j = w_i] = 1/H$ and 
$\mathbb{E}[\sum_i X^i_j] = \alpha$.

Let $Y_j$ denote the binary random variable that takes value
$1$ if the total weight of balls thrown into the $j$-th bin is at least $\alpha \cdot 15/16$. 
Consequently, by Chernoff bound we have
\begin{equation}
\begin{aligned}
\mathbb{E}[Y_j] = Pr[Y_j = 1]&=1-Pr[\sum_i X^i_j < \alpha \cdot \frac{15}{16}] \\
&\geq1-\exp\{- \frac{\alpha}{16^2\cdot2\cdot \zeta}\},
\end{aligned}
\end{equation}
which implies $\mathbb{E}[Y_j] > 127/128$ when $\zeta$ is small enough to 
promise that $\exp\{- 0.01 / \left(16^2\cdot2\cdot \zeta\right)\} < 1/128$.

Let $Z_j$ denote the binary random variable that takes value
$1$ if the total weight of balls thrown into the $j$-th bin is at most $\alpha \cdot 2$. 
Consequently, by Chernoff bound we have
\begin{equation}
\begin{aligned}
\mathbb{E}[Z_j] = Pr[Z_j = 1]&=1-Pr[\sum_i X^i_j > \alpha \cdot 2] \\
&\geq1-\exp\{- \frac{\alpha}{3\zeta}\},
\end{aligned}
\end{equation}
which implies $\mathbb{E}[Z_j] > 127/128$ when $\zeta$ is small enough to
promise that $\exp\{- 0.01 / \left(3\zeta\right)\} < 1/128$.

Hence, we have $\mathbb{E}[\sum_j Y_j] > H \cdot 127/128$ and $\mathbb{E}[\sum_j Z_j] > H \cdot 127/128$.
Note that $Y_1, \ldots, Y_H$ are negatively associated, as well as $Z_1, \ldots, Z_H$ \cite{Daum2012a}.
Hence, by Chernoff bound, it holds that with probability $1 - \exp\{-\Omega(H)\}$ there are
at least $H \cdot 31/32$ bins with weight between $\alpha \cdot 15/16$ and $\alpha \cdot 2$.
\end{proof}

\begin{lemma}\label{lem:bin_ball_basic}
Consider $H$ bins and $ l > H \cdot \Delta$ balls, where $\Delta > 2$. 
Balls are thrown to bins uniformly at random. 
If $\Delta$ is big enough, then with probability $1 - \exp\{-\Omega(H)\}$ 
there are at least $H \cdot 31/32$ bins that contain at least $2$ balls.
\end{lemma}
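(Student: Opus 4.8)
The plan is to mirror, almost verbatim, the proof of Lemma~\ref{lem:bin_ball}. For each bin $j\in\{1,\dots,H\}$ let $N_j$ be the number of balls landing in it; since the $l$ balls are thrown independently and uniformly, $N_j$ is a sum of $l$ independent $\{0,1\}$-indicators, so it is a binomial random variable with $\mathbb{E}[N_j]=l/H\ge\Delta$. Define the binary variable $Y_j$ to be $1$ exactly when $N_j\ge 2$. The first step is to show that, once $\Delta$ is a large enough constant, $\mathbb{E}[Y_j]=Pr[N_j\ge 2]>127/128$.

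For that step I would apply the lower-tail Chernoff bound of Lemma~\ref{lem:chernoff} to $N_j$ (each summand lies in $[0,1]$, so $c=1$). Writing $\mu=\mathbb{E}[N_j]\ge\Delta$ and taking $\delta=1-1/\mu\in(0,1)$ (valid since $\mu\ge\Delta>2$), the event $\{N_j\le 1\}$ is precisely $\{N_j\le(1-\delta)\mu\}$, so $Pr[N_j\le 1]\le\exp\{-\delta^2\mu\}=\exp\{-(\mu-2+1/\mu)\}\le\exp\{-(\Delta-2)\}$. Choosing $\Delta$ large enough that $\exp\{-(\Delta-2)\}<1/128$ gives $\mathbb{E}[Y_j]>127/128$, hence $\mathbb{E}[\sum_j Y_j]>H\cdot 127/128$. (One could equivalently use the clean form $Pr[N_j\le 1]\le\mu e^{1-\mu}$ together with the monotonicity of $x\mapsto xe^{1-x}$ for $x\ge 1$, or observe that adding balls only increases each $N_j$ under the natural coupling, so it suffices to treat the case $l=\lceil H\Delta\rceil$.)

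The second step is concentration of $\sum_j Y_j$. As used in the proof of Lemma~\ref{lem:bin_ball}, the occupancy counts $N_1,\dots,N_H$ are negatively associated, and since each $Y_j$ is a non-decreasing function of $N_j$ alone, the variables $Y_1,\dots,Y_H$ are negatively associated as well~\cite{Daum2012a}. Applying the lower-tail Chernoff bound of Lemma~\ref{lem:chernoff} to $\sum_j Y_j$ with $\mu'=\mathbb{E}[\sum_j Y_j]>127H/128$ and target value $31H/32=124H/128=(1-\delta')\mu'$, we get $\delta'>1-(124/128)/(127/128)=3/127$, a positive constant, and $\mu'>H/2$, so $Pr[\sum_j Y_j\le 31H/32]\le\exp\{-\delta'^2\mu'\}=\exp\{-\Omega(H)\}$. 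Hence with probability $1-\exp\{-\Omega(H)\}$ at least $H\cdot 31/32$ bins contain at least two balls, which is the claim.

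The only place that needs care is the first step: the lower-tail Chernoff bound is being pushed from the mean $\mu=l/H\ge\Delta$ all the way down to the fixed threshold $1$, which is what forces the ``$\Delta$ big enough'' hypothesis (the deviation ratio $\delta=1-1/\mu$ must stay a constant bounded away from $1$ while $\mu$ grows). This is exactly analogous to how the constant $\zeta$ is taken small in Lemma~\ref{lem:bin_ball}, so no genuinely new difficulty arises.
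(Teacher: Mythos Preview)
Your proof is correct and follows essentially the same approach as the paper's: a per-bin lower-tail Chernoff bound to show each $Y_j$ has expectation close to $1$, then negative association plus Chernoff again to concentrate $\sum_j Y_j$. The only cosmetic differences are that the paper sets the threshold at $(1-\delta)\mu=2$ (with $\delta=1-2/\Delta$) and targets $\mathbb{E}[Y_j]>63/64$, whereas you set the threshold at $1$ and target $127/128$; and you are more explicit than the paper about why the $Y_j$'s inherit negative association. One small quibble: your closing remark that ``$\delta=1-1/\mu$ must stay a constant bounded away from $1$'' is not quite right (in fact $\delta\to 1$ as $\mu\to\infty$), but this does not affect the argument since your own computation $\exp\{-(\mu-2+1/\mu)\}\le\exp\{-(\Delta-2)\}$ already shows the bound vanishes as $\Delta$ grows.
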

\begin{proof}
Considering the $j$-th bin, let $X^i_j$ denote the random variable that takes value $1$
if the $i$-th ball is in the $j$-th bin, and $0$ otherwise. 
Since the balls are thrown to bins uniformly at random, then $Pr[X^i_j = 1] = 1/F$ 
and $\mathbb{E}[\sum_i X^i_j] > \Delta$.

Let $Y_j$ denote the binary random variable that takes value
$1$ if the number of balls thrown into the $j$-th bin is at least $2$. 
Consequently, by Chernoff bound we have
\begin{equation}
\begin{aligned}
\mathbb{E}[Y_j] = Pr[Y_j = 1] &=1-Pr[\sum_i X^i_j < 2]  \\
&=1-Pr[\sum_i X^i_j < \frac{2}{\Delta} \cdot \Delta] \\
&\geq1-\exp\{- \left(\frac{\Delta - 2}{\Delta}\right)^2 \cdot \frac{\Delta}{2}\},
\end{aligned}
\end{equation}
which implies $\mathbb{E}[Y_j] > 63/64$ when $\Delta$ is big enough.
By Chernoff bound, it holds that with probability $1 - \exp\{-\Omega(H)\}$ there are
at least $H\cdot 31/32$ bins in which there are at least $2$ balls.
\end{proof}

\begin{corollary}\label{cor:bin_ball}
Consider $H$ bins and $l > H \cdot \Delta$ balls with weights $0 \leq w_1$, $w_2$, \ldots, $w_l \leq \zeta$.
Assume that $\sum_{i=1}^l w_i = \alpha \cdot H$ where $\alpha \geq 0.01$ is a constant. 
Balls are thrown to bins uniformly at random. Then, if $\Delta$ is big enough and $\zeta$ is small enough, 
then with probability $1 - \exp\{-\Omega(H)\}$ there are at least $H \cdot 15/16$ bins 
in which there are at least $2$ balls, and the total weight is between $\alpha \cdot 15/16$ and $\alpha \cdot 2$.
\end{corollary}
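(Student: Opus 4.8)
The plan is to obtain the corollary by a straightforward union bound over the two conclusions already established in Lemma~\ref{lem:bin_ball} and Lemma~\ref{lem:bin_ball_basic}, applied to the \emph{same} random throwing of the $l$ balls into the $H$ bins. Both lemmas speak about the identical experiment, so their guarantees live on one common probability space and may simply be intersected.

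First I would invoke Lemma~\ref{lem:bin_ball_basic}: since $l > H\cdot\Delta$ with $\Delta > 2$, choosing $\Delta$ large enough yields, with probability $1-\exp\{-\Omega(H)\}$, a set $\mathcal{B}_1$ of at most $H/32$ bins that contain fewer than $2$ balls (equivalently, at least $H\cdot 31/32$ bins contain $\geq 2$ balls). Next I would invoke Lemma~\ref{lem:bin_ball}: since $\sum_i w_i = \alpha H$ with $\alpha \geq 0.01$ and each $w_i \leq \zeta$, choosing $\zeta$ small enough yields, with probability $1-\exp\{-\Omega(H)\}$, a set $\mathcal{B}_2$ of at most $H/32$ bins whose total ball-weight falls outside $[\alpha\cdot 15/16,\ \alpha\cdot 2]$. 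The two requirements — $\Delta$ sufficiently large and $\zeta$ sufficiently small — are independent constraints, so they can be met simultaneously exactly as in the hypotheses of the corollary.

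I would then take a union bound: with probability at least $1 - 2\exp\{-\Omega(H)\} = 1-\exp\{-\Omega(H)\}$, both events hold, and the set of ``bad'' bins — those failing at least one of the two properties — is contained in $\mathcal{B}_1 \cup \mathcal{B}_2$, hence has size at most $H/32 + H/32 = H/16$. Therefore at least $H - H/16 = H\cdot 15/16$ bins simultaneously contain at least $2$ balls and carry total weight in $[\alpha\cdot 15/16,\ \alpha\cdot 2]$, which is exactly the claimed conclusion.

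Honestly there is no real obstacle here; the only thing to be careful about is bookkeeping, namely that the two lemmas are instantiated on the same sample of throws (so the bad sets add rather than interact badly) and that the constants $31/32$ in each lemma combine to give the weaker $15/16$ in the corollary, with the $\exp\{-\Omega(H)\}$ failure probability absorbed by a constant factor of $2$. I would just state these points explicitly and conclude.
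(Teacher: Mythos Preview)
Your proposal is correct and takes essentially the same approach as the paper, which simply states that the corollary follows directly from Lemma~\ref{lem:bin_ball} and Lemma~\ref{lem:bin_ball_basic}. You have merely made explicit the union-bound bookkeeping (that the two sets of at most $H/32$ bad bins combine to at most $H/16$), which is exactly the intended argument.
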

\begin{proof}
The conclusion is implied directly from Lemma~\ref{lem:bin_ball} and Lemma~\ref{lem:bin_ball_basic} .
\end{proof}

At the end of this section, we introduce a result given in \cite{AwerbuchRS08}.

\begin{lemma}\label{lem:q1_q0}
Consider a set of $l$ nodes, $v_1, v_2, \ldots, v_l$, transmitting on a channel. 
For node $v_i$, it transmits with probability $0 < p(v_i) < 1/2$. 
Let $w_0$ denote the probability that the channel is idle; and $w_1$ 
the probability that there is exactly one transmission
on the channel. Then, $w_0 \cdot \sum_{i=1}^l p(v_i) \leq w_1 \leq 2 \cdot w_0\cdot \sum_{i = 1}^l p(v_i)$. 
\end{lemma}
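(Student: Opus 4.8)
The plan is to write both $w_0$ and $w_1$ explicitly in terms of the transmission probabilities $p_i := p(v_i)$, using the independence of the nodes' transmission decisions, and then compare them term by term. First I would observe that the channel is idle exactly when no node transmits, so $w_0 = \prod_{i=1}^l (1-p_i)$; and there is exactly one transmission precisely when some single node $v_i$ transmits while all the others stay silent, so $w_1 = \sum_{i=1}^l p_i \prod_{j \neq i}(1-p_j)$. Since each $p_i < 1/2 < 1$, every factor $1-p_i$ is strictly positive, so I can factor $w_0$ out of the sum and write $w_1 = w_0 \cdot \sum_{i=1}^l \frac{p_i}{1-p_i}$.

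With this identity the lemma reduces to sandwiching $\sum_{i=1}^l \frac{p_i}{1-p_i}$ between $\sum_{i=1}^l p_i$ and $2\sum_{i=1}^l p_i$. For the lower bound, $1-p_i \leq 1$ gives $\frac{p_i}{1-p_i} \geq p_i$, hence $w_1 \geq w_0 \sum_{i=1}^l p_i$. For the upper bound, the hypothesis $p_i < 1/2$ gives $1-p_i > 1/2$, hence $\frac{p_i}{1-p_i} < 2p_i$, and summing over $i$ yields $w_1 < 2 w_0 \sum_{i=1}^l p_i$. Combining the two estimates gives the claimed two-sided bound.

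There is essentially no hard step here: once the two probabilities are written in product form the inequalities are elementary, and the only assumption used beyond independence is $p_i < 1/2$, which is exactly what keeps the denominator $1-p_i$ bounded away from $0$ (and in fact below $1$ in the other direction too). The one place to be slightly careful is in justifying the factorization of $w_0$ out of $w_1$, i.e.\ that no factor $1-p_i$ vanishes, but this is immediate from $0 < p_i < 1/2$.
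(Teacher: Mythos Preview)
Your proof is correct and complete; the paper itself omits the proof entirely and simply refers the reader to \cite{AwerbuchRS08}, so there is no in-paper argument to compare against. The factorization $w_1 = w_0 \sum_i \frac{p_i}{1-p_i}$ followed by the elementary bounds $p_i \leq \frac{p_i}{1-p_i} < 2p_i$ is exactly the standard way this inequality is established.
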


The proof is omitted; readers can refer to \cite{AwerbuchRS08}
for the detailed proof.

\section{Uniform Information Echange} \label{sec:protocol}

In this section, we introduce our Uniform Information Exchange (UIE)
protocol. The pseudo-code of the protocol is given in
Algorithm~\ref{algo:1} and Algorithm~\ref{algo:2}.

\textbf{UIE Protocol.}  There are two states for the nodes: \emph{active} and \emph{inactive}. 
Intuitively, the active nodes are trying to transmit messages over the network, while the inactive
nodes just listen for incoming messages. Initially, all the source nodes
are \emph{active}, and the others are \emph{inactive}.

In the protocol, an active node will become inactive when it successfully
transmits its message to other active nodes. This way, on
one hand, the number of active nodes is constantly decreasing, and on the
other hand, it ensures that at any time the active nodes possess all $k$
packets. Hence, when there is only one active node left, it can send all
the $k$ packets to all the other nodes. The utilization of multiple
channels can speed up the reduction of active nodes. By the
transmissions on multiple channels, the active nodes can be reduced on
all channels in parallel. However, as discussed before, when the number
of active nodes becomes small, it cannot guarantee that for a
particular channel, there are multiple active nodes operating on it. As
a result, even if an active node successfully transmits on a channel, its
message may not be received by other active nodes. In other words, the
multiple channels are not efficient any more. Additionally, the protocol
needs to ensure that when the surviving active node transmits, all other
nodes listen on the same channel. Hence, we set a primary channel, which
serves two purposes: first, it is used for
reducing active nodes when the number of active nodes is small;
second, it is used by the surviving active node to disseminate the packets.

Specifically, there are two processes in the protocol: the
multiple-channel transmission process and the primary-channel
transmission process. In the multiple-channel transmission process,
active nodes operate on multiple channels to reduce the number of active
nodes, while in the primary-channel transmission process, nodes operate
on the primary channel. Note that because nodes have no idea about any
network parameters, it is hard for nodes to determine when the
multiple-channel transmission process should finish. Hence, in the
protocol, these two processes are in parallel, rather than consecutive.
Specifically, there are four slots in each round: in the first two slots,
active nodes operate on multiple channels, and in the other slots, nodes
operate on the primary channel. We set the first channel as the special
primary channel. We next introduce the protocol in more detail.

Each active node $v$ maintains two parameters $p(v)$ and $q(v)$.
Denote the values of $p(v)$ and $q(v)$ in a round $t$ by $p_t(v)$ and
$q_t(v)$, respectively. In particular, $p_t(v)$ and $q_t(v)$ are the
transmission probabilities of node $v$ for the multi-channel
transmission process and the primary-channel transmission process in
round $t$, respectively. Initially, $p_0(v) :=q_0(v) := \zeta$, where $0
< \zeta < 1$ is a constant (determined in Lemma~\ref{lem:bin_ball}).
Let $m_t(v)$ denote the set of packets received by node $v$ by round $t$. Initially, 
for a source node $v$ initiated with packet $P$, $m_0(v) := \{P\}$. And for other nodes, $m_0(v) := \emptyset$.



%

The operations in the four slots of each round $t$ are as follows:
\begin{itemize}
\item 
\textbf{Slot 1.} In this slot, the inactive nodes do nothing.
Each active node $v$ 
selects a channel from the $\mathcal{F}$ candidates uniformly at random,
and then transmits
with probability $p_t(v)$ on the selected channel. If it does not transmit, it listens
on the selected channel. If $v$ receives a message containing a set of
packets $m'$, it updates $m_{t+1}(v) := m' \cup m_t(v)$.

At the end of Slot 1, $v$ updates the transmission probability $p$
according to the following rule: if $v$ listens and detects no
transmission on the selected channel, $p_{t+1}(v):= \min\{\zeta, 2\cdot
p_t(v)\}$; otherwise, $p_{t+1}(v):= p_t(v) / 2$.
\item 
\textbf{Slot 2.} In this slot, the inactive nodes still do nothing. For
an active node $v$, if it has received a message in Slot 1, it transmits an
acknowledgement on the selected channel. Otherwise, $v$ listens on the
selected channel.

If an active node $v$ transmitted in slot 1 and detects transmissions on
the selected channel in Slot 2, the state of $v$ switches to
\emph{inactive}.


\item 
\textbf{Slot 3.} In this slot, all nodes operate on the primary channel
(Channel 1). Specifically, all inactive nodes listen, and an active
node $v$ transmits with probability $q_t(v)$. At the end of Slot 3,
active nodes update the transmission probability $q_t(\cdot)$ using the
same rule as in Slot 1.

\item 
\textbf{Slot 4.} For each (active or inactive) node $v$, if $v$ received a message in Slot 3, it transmits an acknowledgement.

For an active node $v$, if $v$ transmitted in Slot 3 and detects transmissions in this slot, it changes its state to \emph{inactive}

%
\end{itemize}


\begin{algorithm}
\textbf{Initialization: for node $v$ at time $0$} \\ 
\nl	$p(v) := q(v) := \zeta$; \\
\nl	\If{initially have a packet $P$}{ 
\nl		$m(v) := \{P\}$; \\
\nl		$state(v) := Active$; \\
	}
\nl	\Else{  
\nl		$m(v) :=\{\}$; \\
\nl		$state(v) := Inactive$; \\
	}
\vspace{0.2cm}
\textbf{Active State: for node $v$ at time $t \geq 0$}\\
\nl \textbf{Slot 1-2:} pick a Channel $r$ uniformly at random and call $channel-use(r, m(v),p(v))$;\\
\nl \textbf{Slot 3-4:} call $channel-use(1, m(v), q(v))$;\\
\vspace{0.2cm}
\textbf{Inactive State: for node $v$ at time $t \geq 0$}\\
\nl \textbf{Slot 1-2:} do nothing;\\
\nl \textbf{Slot 3:} \\
\nl	listen on Channel $1$;\\
\nl	\lIf{receive a message containing a set of packets $m'$}{$m(v) := m(v) \cup m'$;}\\
\nl \textbf{Slot 4:} \lIf{received a message in Slot $3$}{transmit on Channel $1$;}\\
\caption{UIE}
\label{algo:1}
\end{algorithm}

\begin{algorithm}
\textbf{Slot 1:} \\
\nl	on Channel $i$, transmit a message containing packets in $s$ with probability $w$ and listen with probability $1-w$; \\
\nl \If{listened}{
\nl		\If{Channel $i$ is idle} {
\nl			$w := \min\{2w, \zeta\}$;\\
		}
\nl		\ElseIf{received a message containing a set of packets $m'$} {
\nl		 	$s := s \cup m'$; \\
\nl			$w := w/2$;\\
		}
\nl		\Else{ 
\nl			\textrm{// Channel $i$ is busy} \\
\nl			$w := w/2$;\\
		}
	}
\nl	\Else{
\nl		\textrm{// transmitted} \\
\nl		$w := w/2$;\\
}
\vspace{0.3cm}
\textbf{Slot 2:} \\
\nl	\lIf{received a messge in Slot $1$}{transmit on Channel $i$;}\\
\nl	\If{transmited in Slot $1$}{
\nl		listen on Channel $i$; \\
\nl		\lIf{receive a message OR Channel $i$ is busy}{$state(v) := Inactive$;}\\
	}
	
\caption{$channel-use(i,s,w)$}
\label{algo:2}
\end{algorithm}

We state the correctness of the UIE protocol in the following Theorem~\ref{th:co}.
\begin{theorem}\label{th:co}
Consider an execution of the UIE Protocol. When there is exactly one active node left, 
say node $v$ in round $T$, then $p_{T}(v) = \bigcup_{v\in K} m_0(v)$. 
Recall that $K$ is the set of all source nodes.
\end{theorem}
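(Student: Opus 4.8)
The plan is to establish a purely combinatorial \emph{coverage invariant} and read the theorem off from it; none of the probabilistic lemmas of Section~\ref{sec:pre} are needed here. Write $\mathcal{P} := \bigcup_{v\in K} m_0(v) = \{P_1,\ldots,P_k\}$ for the set of all packets present in the system. The invariant is: \emph{at every point during the execution at which the set of active nodes is nonempty, the union of the packet sets currently held by the active nodes equals $\mathcal{P}$.} Since neither Algorithm~\ref{algo:1} nor Algorithm~\ref{algo:2} ever turns an inactive node back into an active one, the number of active nodes is non-increasing in time; consequently, if it ever drops to $1$ it must pass through the value exactly $1$, and at that first moment, which is the round $T$ of the statement, the invariant yields that the unique remaining active node $v$ holds all of $\mathcal{P}$, i.e.\ $m_T(v)=\bigcup_{v\in K}m_0(v)$. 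So everything reduces to the invariant.

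We prove the invariant by induction along the sequence of slots. At time $0$ the active set is exactly $K$ and $\bigcup_{v\in K} m_0(v) = \mathcal{P}$ by Definition~\ref{def:info}, which is the base case. In any later slot the only events that can change either the active set or any held packet set are (i) a listening active node absorbing the packets of a message it received (this happens in Slot~1 and, per Algorithm~\ref{algo:2}, likewise in Slot~3), and (ii) a node switching its state from \emph{active} to \emph{inactive} (in Slot~2 or in Slot~4). Type-(i) events only enlarge individual packet sets and remove no node; since trivially every packet set is contained in $\mathcal{P}$ at all times, such an absorption keeps the union equal to $\mathcal{P}$. All the content is therefore in the deactivations.

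The heart of the argument is that \emph{whenever a node deactivates, its entire current packet set has already been received and absorbed, within the same round and strictly before the deactivation, by some node that is still active} --- with one exception discussed below. Consider a node $u$ that deactivates in Slot~2: by the rule, $u$ transmitted in Slot~1 on its chosen channel $i$ and detected a transmission on $i$ in Slot~2. A Slot~2 transmission on channel $i$ can only be an acknowledgement sent by an \emph{active} node that received a message on $i$ in Slot~1 (inactive nodes are silent in Slots~1--2); for such a reception to occur, channel $i$ must have had a unique transmitter in Slot~1, which --- since $u$ transmitted on $i$ --- is $u$ itself. Hence the acknowledging active node $w$ was listening (not transmitting) on $i$ in Slot~1, so it does not itself deactivate in Slot~2, and it has already executed $m(w) := m(w)\cup m_t(u)$; thus $w$ survives with $m_t(u)\subseteq m(w)$ and the union over the surviving active nodes still equals $\mathcal{P}$. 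The Slot~4 case is analogous: if $u$ deactivates in Slot~4 it was the unique Slot~3 transmitter on the primary channel, and since in Slot~3 \emph{every} active node operates on the primary channel, every other active node was listening there and absorbed $m_t(u)$ before $u$'s deactivation. One also notes, for the ``cannot skip the value $1$'' remark above, that uniqueness of the successful transmitter forces at most one deactivation per channel in Slot~2 --- each paired with a distinct surviving witness, so the active set never becomes empty through Slot~2 --- and at most one deactivation in all of Slot~4.

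The single exceptional case is $u$ deactivating in Slot~4 while being the \emph{only} active node at that point; then there is no ``other active node'' to absorb $m_t(u)$. But in that case there is already exactly one active node at the start of Slot~3 of that round, so that round is the round $T$ named in the statement, and the invariant, which has been maintained through all slots up to that instant, already gives that $u$ holds all of $\mathcal{P}$; what $u$ does afterwards (in fact, its successful Slot~3 transmission hands $\mathcal{P}$ to everybody) is irrelevant to the claim. I expect the main obstacle to be exactly this bookkeeping --- nailing down the within-round order ``absorb, then deactivate'', the fact that a deactivating node's transmission was necessarily collision-free and overheard by a node that remains active, and the handling of the terminal single-node case --- rather than any nontrivial estimate; once these are in place, applying the coverage invariant at the first round with a single active node proves the theorem.
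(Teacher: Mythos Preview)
Your proposal is correct and follows essentially the same approach as the paper: both arguments hinge on the invariant $\bigcup_{v\in A_t} m_t(v) = \bigcup_{v\in A_{t-1}} m_{t-1}(v)$, justified by observing that a node deactivates only after its message has been absorbed by some other node that remains active. Your write-up is more meticulous than the paper's---you spell out why the acknowledging witness in Slot~2 does not itself deactivate, why the active count cannot skip the value~$1$, and how the terminal single-node Slot~4 deactivation is consistent with the claim---whereas the paper compresses all of this into a few sentences; but the underlying idea is identical.
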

\begin{proof}
Denote the set of active nodes in round $t$ by $A_t$. 
Then $A_t \subseteq A_{t-1} \subseteq \ldots \subseteq A_1 \subseteq A_0 = K$ holds for any $t > 0$, according
to the protocol. Then the conclusion follows from the fact that 
$\bigcup_{v\in A_t} m_t(v) = \bigcup_{v\in A_{t-1}} m_{t-1}(v)$ holds for any $t > 0$, which 
is true because when an active node $v$ becomes inactive in some round $t$, it means $m_t(v)$ is known
to some other active node $u$ which is still active in round $t+1$. In detail, if an active node received
acknowledgement or detected collisions in Slot $2$, then it means its message has been received by some other active nodes; 
if an active node received an acknowledgement or detected collisions in Slot 4, then it means its message has 
been received by all the other nodes in the network, including the active ones if any exists.
\end{proof}

\ignore{
\begin{remark}
Suppose there is one active node $v$ that is
left with $p_{T}(v) = \bigcup_{v\in K} m_0(v)$, where 
$T = \Omega(\log n)$ (as shown later in Section~\ref{sec:analysis}). Then it is easy 
to find out that after another $O(T)$ rounds, $\bigcup_{v\in K} m_0(v)$ will be known to all 
the nodes with high probability. Note that all the nodes other than node $v$ are inactive after 
time $T$. To see why this it true, note that node $v$ needs at most $T$ rounds to recover
its transmit probability $q(v)$ to $\zeta$. Hence, with high probability, during the $T$ rounds thereafter,
node $v$ can disseminate the message containing all $k$ packets to all other nodes on the primary channel.
\end{remark}
}

\section{Analysis of the Protocol}\label{sec:analysis} 

In this section, we prove that with $k$ source nodes, our protocol can disseminate all $k$ packets to the whole network in $O(k/\mathcal{F} + \mathcal{F} \cdot \log n)$ rounds with high probability.
Recall that $\mathcal{F}$ is the number 
of available channels and $n$ is the number of nodes in the network. Formally, this conclusion is summarized in Theorem~\ref{thm:info}.

\begin{theorem}\label{thm:info}
Consider information exchange on a network of size $n$ with
$\mathcal{F}$ available
channels. For the case where there are initially $k \leq n$ source
nodes, the following conclusions hold: 
\begin{enumerate}
\item
There exists a constant $\nu > 0$ such that with high probability, there is
only one active node left at time $T^* := \nu(k/\mathcal{F} +
\mathcal{F} \cdot \log n)$.
\item
For time $T^*$ when there is only one active node $v$ left,
at time $T^{**} := 2 \cdot T^* + \log n= O(T^*)$ it holds 
with high probability that every node in the network knows the $k$
packets initially maintained
by the source nodes and node $v$ becomes inactive.
\end{enumerate}
\end{theorem}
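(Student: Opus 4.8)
The plan is to split the argument into two phases, mirroring the four-slot structure, and to track the global quantity $P_t := \sum_{v \in A_t} p_t(v)$ (resp.\ $Q_t := \sum_{v \in A_t} q_t(v)$), the total transmission probability over active nodes. The heart of Part~1 is a potential/drift argument showing that $P_t$ stabilizes in a ``safe range'' $[\alpha_1 \mathcal{F}, \alpha_2 \mathcal{F}]$. First I would establish the \emph{channel-consistent-updating} property: within Slots 1--2, all active nodes that selected the same channel $i$ see the same event (idle / single transmission / collision) and hence update $p$ in the same direction. Using Lemma~\ref{lem:bin_ball} and Corollary~\ref{cor:bin_ball}, when $P_t$ is in the safe range the active nodes distributed over the $\mathcal{F}$ channels land, on a $15/16$-fraction of channels, with per-channel total probability in $[\alpha_1 \cdot \frac{15}{16}\mathcal{F}/\mathcal{F}, \alpha_2 \cdot 2]$-ish bounds and at least $2$ balls; on such a channel Lemma~\ref{lem:q1_q0} gives a constant probability $w_1 = \Theta(1)$ of a successful (single) transmission, and by the Slot-2 acknowledgement the transmitter goes inactive. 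So each round in the safe range removes $\Omega(\mathcal{F})$ active nodes in expectation, and a Chernoff bound (Lemma~\ref{lem:chernoff}) makes this hold with high probability as long as $|A_t|$ is still $\Omega(\mathcal{F}\log n)$; summing, the safe-range phase lasts $O(k/\mathcal{F})$ rounds before $|A_t|$ drops below $\Theta(\mathcal{F}\log n)$.

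Next I would handle the \emph{recovery} to the safe range. If $P_t$ is too small, then a large fraction of channels are ``idle-prone'': Lemma~\ref{lem:q1_q0} (or a direct computation) shows idle probability is bounded away from $0$, so by the consistent-updating rule the active nodes on those channels double their $p$, and in expectation $P_{t+1} \approx \frac{3}{2}P_t$ or so — geometric growth, capped by the $\min\{\zeta, 2p\}$ rule — so after $O(\log(\mathcal{F}/P_0)) = O(\log n)$ rounds $P_t$ reaches the safe range (here the assumption $k \ge \mathcal{F}\log n$ and $P_0 = k\zeta$ matters: $P_0 \ge \zeta\mathcal{F}\log n$, so actually we may already start high and instead need to argue we don't crash out). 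Symmetrically, if $P_t$ is too large, a constant fraction of channels are collision-heavy, nodes there halve $p$, and $P_t$ contracts geometrically. The delicate point — and what I expect to be the main obstacle — is that the selection is random and independent, so on a minority of channels the nodes update the ``wrong'' way; one must show the majority effect dominates uniformly, i.e.\ prove a supermartingale-type statement that $P_t$ returns to and then stays within the safe range for $\Omega(k/\mathcal{F}+\mathcal{F}\log n)$ consecutive rounds with high probability, absorbing the bad minority channels into the constant slack between $\alpha_1 \cdot \frac{15}{16}$ and $\alpha_1$, etc. This is where the carefully chosen constants $15/16$, $31/32$, $127/128$ in the preliminary lemmas get spent.

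For the tail of Part~1 — when $|A_t|$ has fallen below $\Theta(\mathcal{F}\log n)$, so multiple channels are no longer guaranteed to be useful — I would switch attention to Slots 3--4 on the primary channel, which run \emph{in parallel} the whole time. There the situation is a single-channel contention-resolution instance: all $\le \Theta(\mathcal{F}\log n)$ active nodes adjust $q$ by the same doubling/halving rule on one channel, so $Q_t$ is driven into a constant-size safe window by the same (but now one-dimensional and cleaner) drift argument, each successful transmission removes one active node via the Slot-4 acknowledgement, and Lemma~\ref{lem:q1_q0} guarantees $\Omega(1)$ success probability per round; a Chernoff bound over $O(\mathcal{F}\log n)$ rounds then reduces the active count to $1$ with high probability, giving the $O(\mathcal{F}\log n)$ additive term. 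Combining the two regimes — but being careful to take a union bound over only $O(k/\mathcal{F}+\mathcal{F}\log n)$ rounds, each failing with probability $1/n^{c}$, which is why we need $k \le n$ so the round count is polynomial in $n$ — yields a single active node by time $T^* = \nu(k/\mathcal{F} + \mathcal{F}\log n)$, proving Part~1.

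Part~2 is comparatively easy and I would dispatch it quickly. Once only node $v$ is active at time $T^*$, in Slots 1--2 it transmits on a random channel with probability $p(v)$; whenever it listens it sees ``idle'' (no one else is active), so by the update rule $p(v)$ climbs to $\zeta$ within $O(\log n)$ rounds and thereafter each round it has a constant probability $\zeta/2$-ish of transmitting — but this is irrelevant; the real action is on the primary channel in Slots 3--4, where every other (inactive) node listens. Similarly $q(v)$ recovers to $\zeta$ within $O(\log(\zeta^{-1})) = O(1)$ rounds (it only ever halved, never below $q_0 = \zeta$'s worth — actually one must bound how far down $q(v)$ went, which is at most a factor $2^{T^*}$, so recovery takes $\le T^*$ rounds, hence the $2T^*$ in the statement), and once $q(v) = \zeta$ each round $v$ transmits on the primary channel with probability $\zeta$, so the full packet set $\bigcup_{u\in K} m_0(u)$ (which $v$ holds, by Theorem~\ref{th:co}) reaches every node after $O(\log n)$ further rounds with high probability; at that point all nodes acknowledge in Slot~4 and $v$ turns inactive. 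Adding up, $T^{**} = 2T^* + \log n = O(T^*)$ suffices, completing the proof.
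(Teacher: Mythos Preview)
Your proposal is correct and follows essentially the same approach as the paper: the paper's own proof of the theorem defers Part~1 to two lemmas (one bounding the multi-channel phase by $O(k/\mathcal{F})$ rounds via the safe-range drift argument you describe, one bounding the primary-channel phase by $O(\mathcal{F}\log n)$ rounds via the one-dimensional version), and for Part~2 argues exactly as you do that $q(v)$ may have been halved up to $T^*$ times and hence recovers to $\zeta$ by time $2T^*$, after which $O(\log n)$ rounds on the primary channel suffice. The only stylistic difference is that where you speak of a supermartingale-type stability statement, the paper formalizes the ``stays in the safe range'' part via a wave/hole counting argument (defining maximal excursions above $\alpha'_2\mathcal{F}$ or below $\alpha'_1\mathcal{F}$ and showing by contradiction that they cannot occupy more than a quarter of the rounds); this is the same idea packaged differently.
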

\begin{proof}
The first conclusion follows directly from the
Lemmas~\ref{lem:multi} and \ref{lem:primary} which will be given later in
Section~\ref{sec:multi} and \ref{sec:primary}, respectively.
Here we first prove the second conclusion.

Since at time $T^*$ there is only one active node $v$ left, we know that $q_{2T^*}(v)$ will get 
back to $\zeta$ if node $v$ is still active at time $2T^*$. Note that 
if node $v$ transmits with probability $\zeta$ on the primary channel (Slot $3$) for the subsequent $\log n$ rounds, 
then with high probability there exists one round in which node $v$ transmits and consequently
all the nodes in the network will receive the message. As shown in Theorem~\ref{th:co}, the message transmitted by $v$ contains all $k$ packets. Hence, all nodes will get these packets in the received message. Finally, since the inactive nodes that received 
a message in Slot $3$ transmit on the primary channel, then node $v$ detects transmissions 
in Slot $4$ and becomes inactive.
\end{proof}

We next briefly introduce the analysis process for the first conclusion
in Theorem~\ref{thm:info}. Recall that there are two parallel processes
in our algorithm: the multi-channel transmission process (the first two
slots in each round) and the primary-channel transmission process (the
last two slots in each round). As discussed before, when there are many
active nodes (more than $\mathcal{F} \cdot \log n$), multiple channels
should be efficient in reducing the number of active nodes. When the
number of active nodes is reduced to something small (less than $\mathcal{F}
\cdot \log n$), the utilization of multiple channels might not be
efficient any more, since for a particular channel, there might not be
multiple nodes selecting it. In this case, we have to rely on the
primary-channel transmission process to reduce the number of active
nodes. Therefore, we divide the analysis into two parts. The first part
analyzes how long it takes to decrease the number of active nodes to
$\mathcal{F} \cdot \log n$ and the second part deals with how long it
takes to further reduce the number of active nodes to one. More
precisely, let $A_t$ denote the set of active nodes in a round $t$. Let
$T$ be the first round in which the number of
active nodes drops below $\mathcal{F} \cdot \log n$. That is, for any time $t < T$, it holds that
$|A_t| \geq \mathcal{F}\cdot \log n$, and for any time $t \geq T$, it
holds that $|A_t| < \mathcal{F} \cdot \log n$. Then the whole analysis
is divided into two parts by $T$: The first part concerns the
time period from
$0$ to $T-1$, and the second part considers the algorithm execution
since $T$. In the first part of the analysis, we mainly analyze the
efficiency of the multi-channel transmission process in reducing the
number of active nodes, and in the second part, we are mainly concerned
about the efficiency of the primary-channel transmission process.

In the rest of this section, we assume that $k \geq \mathcal{F} \cdot
\log n$. Otherwise, we can jump directly to the second part of the analysis.


\subsection{Efficiency of Multiple Channels}\label{sec:multi}

In this section, we analyze the first part, i.e., the period from time
$0$ to the first round when the number of active nodes drops below
$\mathcal{F}\cdot \log n$. The conclusion is summarized in the following
Lemma~\ref{lem:multi}.

\begin{lemma}\label{lem:multi}
There exists $T = O(k / \mathcal{F})$ such that in round $T$ it holds
with high probability that
$|A_T| < \mathcal{F} \cdot \log n$.
\end{lemma}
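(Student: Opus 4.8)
The plan is to show that the multi-channel transmission process removes active nodes at an amortized rate of $\Omega(\mathcal{F})$ per round, for as long as $|A_t|\ge \mathcal{F}\log n$, so that after $O(k/\mathcal{F})$ rounds the active set has dropped below $\mathcal{F}\log n$. The central quantity to track is the total transmission probability $\Phi_t := \sum_{v\in A_t} p_t(v)$. I would first argue that $\Phi_t$ is confined, in a drift sense, to a ``safe range'' $[\alpha_1\mathcal{F},\alpha_2\mathcal{F}]$: think of each active node as a ball of weight $p_t(v)\le\zeta$ thrown into one of $\mathcal{F}$ bins (channels). By Corollary~\ref{cor:bin_ball}, whenever $\Phi_t = \alpha\mathcal{F}$ with $\alpha\ge 0.01$ and $|A_t|$ is large enough that every channel receives $\gg 1$ nodes, at least $\mathcal{F}\cdot 15/16$ channels carry total weight in $[\alpha\cdot 15/16,\alpha\cdot 2]$ and contain at least two active nodes. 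On such a channel Lemma~\ref{lem:q1_q0} gives that a single successful transmission occurs with probability $\Theta(1)$ (since $w_0\cdot(\text{sum of }p)$ and $w_1$ are comparable and the sum is $\Theta(1)$), and by the channel-consistent-updating property all nodes on a ``busy'' or ``successful'' channel halve their probabilities while all nodes on an ``idle'' channel double theirs. The key consequence: if $\Phi_t$ is below the safe range, a majority of channels are idle and $\Phi$ roughly doubles (capped by $\zeta$ per node); if $\Phi_t$ is above, a majority of channels are busy and $\Phi$ roughly halves; so $\Phi_t$ returns to and then stays, with high probability over any polynomial window, within $[\alpha_1\mathcal{F},\alpha_2\mathcal{F}]$. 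This is the self-stabilization argument the paper advertises, and it is the step I expect to be the main obstacle, because the updates on different channels move in opposite directions and one must show the ``good'' channels dominate the net effect; Corollary~\ref{cor:bin_ball} is the tool that makes this rigorous, and the constants $15/16, 31/32, 2$ are chosen precisely so the good channels outnumber the bad ones.

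Granting that $\Phi_t = \Theta(\mathcal{F})$ throughout the first phase, I would then count successful transmissions. On each of the $\ge\mathcal{F}\cdot 15/16$ good channels, the probability of exactly one transmission in Slot 1 is $\Theta(1)$ by Lemma~\ref{lem:q1_q0}; conditioned on that, because the channel has at least two active nodes, the transmitting node's message is received by another \emph{active} node, which in Slot 2 sends the acknowledgement, so the transmitter becomes inactive. Hence in each round the expected number of active nodes eliminated is $\Omega(\mathcal{F})$. Using the Chernoff bound (Lemma~\ref{lem:chernoff}) — together with the negative-association / independence structure across channels within a round — the number eliminated in a round is $\Omega(\mathcal{F})$ with probability $1-\exp\{-\Omega(\mathcal{F})\}$, which is $1-\exp\{-\Omega(\log n)\}$ since $\mathcal{F}$ is non-constant and $k\ge\mathcal{F}\log n$; a union bound over the $O(k/\mathcal{F})$ rounds keeps the failure probability polynomially small. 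Therefore after $T=O(k/\mathcal{F})$ rounds the active set has been reduced by $k$, i.e. $|A_T|<\mathcal{F}\log n$.

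A few technical points I would handle along the way. First, the definition $p_0(v)=\zeta$ means $\Phi_0=\zeta k$, which may be far above $\alpha_2\mathcal{F}$; the stabilization argument shows $\Phi$ halves each round until it enters the safe range, costing only $O(\log(\zeta k/\mathcal{F}))=O(\log n)$ rounds, absorbed into the $O(k/\mathcal{F})$ bound since $k\ge\mathcal{F}\log n$. Second, one must check the hypothesis of Corollary~\ref{cor:bin_ball} that each channel gets $\ge\Delta$ balls: this needs $|A_t|\ge\mathcal{F}\Delta$, which is implied by $|A_t|\ge\mathcal{F}\log n$ for $n$ large, so the argument is valid exactly on the interval $[0,T)$ we care about. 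Third, during a round where $\Phi_t$ is temporarily outside the safe range we do not claim progress in reducing $|A_t|$, only that $\Phi$ self-corrects within $O(1)$ rounds on average; since such correction rounds are a lower-order count, they do not affect the $O(k/\mathcal{F})$ bound. Finally, the primary-channel slots (Slots 3–4) run in parallel and can only help (they never re-activate a node), so they can be ignored for the upper bound in this lemma.
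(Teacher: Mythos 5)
Your overall strategy matches the paper's (track $\Phi_t=\sum_{v\in A_t}p_t(v)$, keep it near a safe range $[\alpha_1\mathcal{F},\alpha_2\mathcal{F}]$ via Corollary~\ref{cor:bin_ball} and the channel-consistent updates, then count transmitters that become inactive), but your failure-probability bookkeeping has a genuine gap. You claim that in each round the number of eliminated active nodes is $\Omega(\mathcal{F})$ with probability $1-\exp\{-\Omega(\mathcal{F})\}$, and that this is $1-\exp\{-\Omega(\log n)\}$ ``since $\mathcal{F}$ is non-constant and $k\ge\mathcal{F}\log n$'', so a union bound over $O(k/\mathcal{F})$ rounds works. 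That inference is false: the model only assumes $\mathcal{F}$ exceeds any constant, so $\mathcal{F}$ could be, say, $\Theta(\log\log n)$, in which case $\exp\{-\Omega(\mathcal{F})\}$ is only polylogarithmically small and the union bound over polynomially many rounds collapses. The same problem undermines your stabilization claim that $\Phi_t$ ``returns to and then stays, with high probability over any polynomial window,'' inside the safe range: the per-round drift statements (the analogues of Lemmas~\ref{lem:p_decrease} and~\ref{lem:p_increase}) hold only with constant probability (or at best $1-\exp\{-\Omega(\mathcal{F})\}$), so you cannot guarantee containment in \emph{every} round of a long window; indeed the paper never proves that.

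The paper's route around both difficulties is an amortized one, and it is the missing idea in your sketch. Lemma~\ref{lem:safe} is only stated with \emph{constant} probability of $\Omega(\mathcal{F})$ removals in a safe round, and Lemma~\ref{lem:stable} (via the ``wave''/``hole'' coupling argument built on Lemmas~\ref{lem:most_down} and~\ref{lem:most_up}) shows only that in any window of $\tau\log n$ rounds at least a $3/4$ fraction of rounds lie in the safe range with probability $1-O(n^{-1})$ --- high probability is obtained by aggregating over $\Omega(\log n)$ rounds, not per round. The proof of Lemma~\ref{lem:multi} then takes a window of length $T'=\max\{2k/(\mathcal{F}c_1c_2),\tau\log n\}$ after the $O(\log n)$ adaptation phase (Lemma~\ref{lem:adaption}), notes that at least $T'/2$ of its rounds are safe, and applies a Chernoff bound across these rounds to conclude that, if $|A_t|$ stayed at least $\mathcal{F}\log n$ throughout, more than $k$ nodes would have become inactive with high probability --- a contradiction. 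Your proposal, by contrast, needs per-round high-probability progress, which the hypotheses do not support; to repair it you would have to replace the per-round whp claim and the ``always in the safe range'' claim with exactly this kind of fraction-of-rounds plus aggregate-counting argument.
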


The main idea in proving Lemma~\ref{lem:multi} is to find a proper
$\gamma' > 0$ such that after the
protocol has been running for $T' = O(\log n)$ rounds, within any period of 
$\gamma'$ rounds subsequently with $|A_t| \geq \mathcal{F} \cdot \log n$, there are (with constant probability)
$\Omega(\mathcal{F})$ active nodes that switch from the \emph{active} state to the \emph{inactive} state.
Then, with high probability, there are $k - \mathcal{F} \cdot \log n < k$ active nodes switching to \emph{inactive},
in a period of $O(\log n + k/\mathcal{F})$ (which is $O(k/\mathcal{F})$
for $k > \mathcal{F}\cdot \log n$) rounds. To prove
Lemma~\ref{lem:multi}, we need to introduce and prove a series of
``small'' lemmas at first, and leave the proof of
Lemma~\ref{lem:multi} to the end of this section. We next do some
preparation for proving Lemma~\ref{lem:multi}.

By using more channels, it is natural to expect that the number of 
successful transmissions is increased accordingly. 
Specifically, it is expected that in a round, 
there should be $\Omega(\mathcal{F})$ successful transmissions with
$\mathcal{F}$ channels. In the following Lemma~\ref{lem:safe}, we show
that if a ``safe range'' on the total transmission probability of all
active nodes is satisfied, the above expectation is true.

\begin{lemma}[Safe range]\label{lem:safe}
Consider the Uniform Information Exchange Protocol. For a round $t > 0$ with $|A_t| \geq \mathcal{F} \cdot \log n$, 
if there exist constants $\alpha_1, \alpha_2 \geq 1$ such that 
$\alpha_1 \cdot \mathcal{F} \leq \sum_{v \in A_t} p_t(v) \leq \alpha_2 \cdot \mathcal{F}$, 
then with constant probability there are $\Omega(\mathcal{F})$ active nodes switching to the inactive state 
in the second slot.
\end{lemma}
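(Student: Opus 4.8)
The plan is to combine the "balls into bins" machinery (Corollary~\ref{cor:bin_ball}) with the single-channel estimate in Lemma~\ref{lem:q1_q0} to show that a constant fraction of the $\mathcal{F}$ channels each carry exactly one transmission in Slot~1, and that each such successful transmission causes (via the Slot~2 acknowledgement) at least one active node to become inactive.

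\textbf{Step 1: Set up the balls-and-bins picture.} Treat the $\mathcal{F}$ channels as $H = \mathcal{F}$ bins and the active nodes as balls, where ball $v$ has weight $w_v := p_t(v) \le \zeta$. By hypothesis $\sum_{v\in A_t} p_t(v) = \alpha\cdot\mathcal{F}$ for some $\alpha \in [\alpha_1,\alpha_2]$, with $\alpha \ge 1 \ge 0.01$, and because $|A_t| \ge \mathcal{F}\cdot\log n$ while each weight is at most the constant $\zeta$, the number of balls exceeds $\mathcal{F}\cdot\Delta$ for the fixed constant $\Delta$ required by Lemma~\ref{lem:bin_ball_basic} (this is where $|A_t| \ge \mathcal{F}\log n$ and $n$ large enough are used). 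Applying Corollary~\ref{cor:bin_ball}, with probability $1 - \exp\{-\Omega(\mathcal{F})\}$ there is a set $\mathcal{C}$ of at least $\mathcal{F}\cdot 15/16$ "good" channels, each of which receives at least $2$ active nodes and has total selecting-probability mass in $[\alpha\cdot 15/16,\ \alpha\cdot 2]$. Since $\mathcal{F}$ is non-constant, $\exp\{-\Omega(\mathcal{F})\}$ is negligible and we may condition on this event at constant (indeed overwhelming) cost.

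\textbf{Step 2: Each good channel succeeds with constant probability, independently enough.} Fix a good channel $c \in \mathcal{C}$ and condition on the set $S_c$ of active nodes that selected it; we have $\sum_{v\in S_c} p_t(v) \in [\alpha\cdot 15/16, 2\alpha]$, which is $\Theta(1)$, and $|S_c|\ge 2$. Conditioned on $S_c$, the transmit/listen decisions on channel $c$ are independent across $v\in S_c$ and independent of what happens on the other channels. By Lemma~\ref{lem:q1_q0} (applicable since $p_t(v)\le\zeta<1/2$), the probability that exactly one node in $S_c$ transmits is at least $w_0\cdot\sum_{v\in S_c}p_t(v)$, where $w_0 = \prod_{v\in S_c}(1-p_t(v)) \ge \exp\{-2\sum_{v\in S_c}p_t(v)\} \ge \exp\{-4\alpha_2\}$ is a positive constant; so channel $c$ carries a successful transmission with probability at least some constant $\beta = \beta(\alpha_1,\alpha_2,\zeta) > 0$. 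When this happens, in Slot~2 the (at least one) receiver — which is an active node, since the listeners in $S_c$ are active — transmits an acknowledgement, the unique Slot-1 transmitter detects a transmission in Slot~2, and it switches to inactive. Hence each good channel independently yields at least one active-to-inactive switch with probability $\ge\beta$.

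\textbf{Step 3: Aggregate over channels.} Let $X$ be the number of good channels that produce a successful transmission. Conditioned on $\mathcal{C}$ and on $\{S_c\}_{c\in\mathcal{C}}$, the indicator events are mutually independent across $c$, so $\mathbb{E}[X]\ge\beta\cdot|\mathcal{C}|\ge \beta\cdot\mathcal{F}\cdot 15/16 = \Omega(\mathcal{F})$, and a Chernoff bound (Lemma~\ref{lem:chernoff}, with $c=1$) gives $X \ge \mathbb{E}[X]/2 = \Omega(\mathcal{F})$ with probability $1 - \exp\{-\Omega(\mathcal{F})\}$. Removing the conditioning and absorbing the $\exp\{-\Omega(\mathcal{F})\}$ failure terms, we conclude that with constant (in fact $1-\exp\{-\Omega(\mathcal{F})\}$) probability at least $\Omega(\mathcal{F})$ active nodes switch to inactive in Slot~2, which is the claim. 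The one subtlety to handle carefully — the main obstacle — is that distinct good channels might "count" the same node's deactivation; but a node can select only one channel per round, so the transmitter deactivated on channel $c$ is distinct from the one deactivated on channel $c'\ne c$, and the deactivations are genuinely disjoint, so no double-counting occurs. The only other care needed is to check that the independence used in Steps~2–3 is legitimate: once we condition on the full channel-assignment (which node picked which channel), the transmit-coin flips on different channels are independent, and the "good channel" event from Corollary~\ref{cor:bin_ball} is measurable with respect to that assignment alone, so the conditioning is consistent.
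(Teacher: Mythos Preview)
Your proof is correct and follows essentially the same route as the paper's: invoke Corollary~\ref{cor:bin_ball} to obtain $\Theta(\mathcal{F})$ ``good'' channels (each with $\ge 2$ active nodes and $\Theta(1)$ probability mass), use Lemma~\ref{lem:q1_q0} together with the standard lower bound on $\prod(1-p_t(v))$ to get a constant single-success probability per good channel, observe that a success forces the transmitter inactive in Slot~2, and finish with a Chernoff bound across channels. Your write-up is in fact a bit more careful than the paper's---you explicitly justify the conditional independence across channels and rule out double-counting of deactivations---but the structure and the key lemmas invoked are the same.
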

\begin{proof}
For the convenience of the argument, we introduce a series of random variables 
$X^i(v)$ with $i = 1, \cdots, \mathcal{F}$ and $v \in A_t$. The variable $X^i(v)$ takes value 
$p_t(v)$ if node $v$ selects Channel $i$ in the $1$st slot of round $t$; 
otherwise, $X^i(v) := 0$. Furthermore, denote $X^i := \sum_v X^i(v)$.
By Corollary~\ref{cor:bin_ball}, with probability $1 - \exp\{-\Omega(\mathcal{F})\}$, 
there are at least $\mathcal{F} \cdot 15/16$ channels, such that
for each of them there are at least two active nodes selecting it and
the total transmission probability of these active nodes is between
$\alpha_1 \cdot 15/16$ and $2 \cdot \alpha_2$.
Next, we show that in such cases there are $\Omega(\mathcal{F})$ active
nodes switching to the \emph{inactive} state with constant probability.

With $b_i \in [0,1/2]$ for $i = 0, 1, \ldots$, it holds \cite{DaumGGKN13} that 
\begin{eqnarray}\label{eqn:idle}
4^{-\sum_i b_i} \leq \prod_i (1-b_i) \leq e^{-\sum_i b_i}.
\end{eqnarray}
Hence, for Channel $i$ with $X^i$ between $\alpha_1 \cdot 15/16$ and $2\cdot \alpha_2$, it is idle with 
probability at least $4^{-4\alpha_2}$, and there is exactly one transmission on the channel with probability at least 
$\alpha_1 \cdot 4^{-4\alpha_2} \cdot 15/16$ (by Lemma~\ref{lem:q1_q0}). 
If there are at least two active nodes
selecting a channel and there is only one node transmitting on the
channel, then the transmission will succeed and the one transmitting in
the first slot will sense transmissions in the second slot. According to
the algorithm, the node that transmitted will switch to the
\emph{inactive} state. Therefore there are at least $\mathcal{F} \cdot 15/16$ channels such that for each of them
there is an active node switching to the \emph{inactive} state with probability at least $\alpha_1 \cdot 4^{-4\alpha_2} \cdot 15/16$. 
In expectation, there are $C\cdot \mathcal{F}$ new \emph{inactive} nodes where  
$C := (1-\exp\{-\Omega(\mathcal{F})\})\alpha_1 \cdot 4^{-2\alpha_2} \cdot 15/16$ which is 
at least $\alpha_1 \cdot 4^{-2\alpha_2} \cdot 15/32$ when $\mathcal{F}$
is large enough. Using the Chernoff bound
over the $\mathcal{F}$ channels, it holds that with constant probability (given $\alpha_1$, $\alpha_2$, and large enough $\mathcal{F}$), 
there are $\Omega(\mathcal{F})$ active nodes switching to the
\emph{inactive} state in the second slot of round $t$.
This completes the proof.
\end{proof}

With the above Lemma~\ref{lem:safe}, now the proof idea becomes clear, 
we only need to show that once initiated, the network
will fall into the safe range very soon, and then stays in this range as long as there are
enough active nodes, i.e\ $|A_t| \geq \mathcal{F} \cdot \log n$.
At the very beginning of the protocol, nodes are initiated with constant transmission probabilities, i.e.\
$p_0(\cdot) = \zeta$. Therefore, the summation of the initial transmission probabilities 
might be as large as $n \cdot \zeta$. We need to consider how long it takes for the 
summation $\sum_{v\in A_t} p_t(v)$ to drop below $\mathcal{F}\cdot \alpha_2$, 
where $\alpha_2 > 0$ is the constant defined by the safe range.

\begin{lemma}\label{lem:p_decrease}
For a round $t$ with $\sum_{v\in A_t} p_t(v) = \alpha \cdot \mathcal{F}$, 
it holds that $Pr[\sum_{v\in A_t} p_{t+1}(v) \leq \alpha \cdot \mathcal{F} \cdot 3/4] \geq 7/8$ for large enough $\alpha$.
\end{lemma}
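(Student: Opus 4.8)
The plan is a one-step drift argument: I would compare $p_{t+1}(v)$ against its ``default'' value $p_t(v)/2$, show the total excess is small in expectation, and finish with Markov's inequality (no Chernoff needed). First I would observe the asymmetry built into the update rule of Slot~1: for $v\in A_t$ we have $p_{t+1}(v)=p_t(v)/2$ \emph{unless} $v$ listens and the channel it selected carries no other transmission; call such a $v$ \emph{lucky}. For a lucky node $p_{t+1}(v)=\min\{\zeta,2p_t(v)\}\le 2p_t(v)$, so it exceeds its halved value by at most $\tfrac32 p_t(v)$. Writing $B:=\sum_{v\in A_t,\ v\ \mathrm{lucky}}p_t(v)$, splitting the sum over lucky and non-lucky nodes gives the deterministic inequality $\sum_{v\in A_t}p_{t+1}(v)\le \tfrac12\alpha\mathcal{F}+\tfrac32 B$.

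Next I would bound $\mathbb{E}[B]$. Since every active node picks its channel uniformly and flips its transmission coin independently of everything else, the events ``$u$ transmits on channel $i$'' (for $u\in A_t\setminus\{v\}$, a fixed channel $i$) are mutually independent, each with probability $p_t(u)/\mathcal{F}$. Hence the probability that no node other than $v$ transmits on channel $i$ equals $\prod_{u\ne v}\bigl(1-p_t(u)/\mathcal{F}\bigr)$, independent of $i$; and since $v$ is lucky only if it listens and gets such a channel, $\Pr[v\ \mathrm{lucky}]\le \prod_{u\ne v}\bigl(1-p_t(u)/\mathcal{F}\bigr)\le \exp\{-\tfrac1{\mathcal{F}}\sum_{u\ne v}p_t(u)\}=\exp\{-\alpha+p_t(v)/\mathcal{F}\}\le e^{\,1-\alpha}$, using $p_t(v)\le\zeta<1\le\mathcal{F}$ in the last step. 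Therefore $\mathbb{E}[B]=\sum_{v\in A_t}p_t(v)\,\Pr[v\ \mathrm{lucky}]\le e^{\,1-\alpha}\,\alpha\mathcal{F}$.

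Finally, Markov's inequality yields $\Pr\bigl[B\ge \alpha\mathcal{F}/6\bigr]\le 6e^{\,1-\alpha}$, which is at most $1/8$ as soon as $\alpha$ exceeds the absolute constant $1+\ln 48$ (so ``large enough $\alpha$'' in the statement can be taken as, say, $\alpha\ge 5$). On the complementary event $B<\alpha\mathcal{F}/6$, the deterministic bound gives $\sum_{v\in A_t}p_{t+1}(v)<\tfrac12\alpha\mathcal{F}+\tfrac32\cdot\tfrac{\alpha\mathcal{F}}{6}=\tfrac34\alpha\mathcal{F}$, which is exactly the claim. I do not expect a real obstacle: the only delicate points are (i) justifying the product form of $\Pr[v\ \mathrm{lucky}]$ from the mutual independence of channel choices and transmission coins, and (ii) checking $p_t(v)/\mathcal{F}<1$ so that $1-x\le e^{-x}$ is applied to genuine probabilities — both are immediate from $\zeta<1$ and $\mathcal{F}\ge1$. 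It is worth noting that, unlike the ``balls into bins'' estimates of Lemma~\ref{lem:bin_ball}, here plain Markov on the excess term $B$ already beats the required failure probability $1/8$, so no concentration over channels is needed.
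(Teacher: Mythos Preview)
Your proof is correct and takes a genuinely different, more elementary route than the paper. The paper conditions on the random channel assignment, invokes the balls-into-bins estimate (Lemma~\ref{lem:bin_ball}) to argue that most channels receive probability mass close to $\alpha$, then shows each such channel is non-idle with probability $\ge 31/32$ for large $\alpha$, and finally applies a Chernoff bound across the (independent, once the assignment is fixed) channels to conclude that on at least $(15/16)^2\mathcal{F}$ channels every node halves. This channel-by-channel decomposition needs $\mathcal{F}$ large for the $1-\exp\{-\Omega(\mathcal{F})\}$ terms to exceed $7/8$. Your argument bypasses all of this: by writing $p_{t+1}(v)=p_t(v)/2+(\text{excess if lucky})$ and bounding the \emph{expected} total excess $\mathbb{E}[B]\le e^{1-\alpha}\alpha\mathcal{F}$ directly via the product form $\prod_{u\ne v}(1-p_t(u)/\mathcal{F})$, a single Markov inequality already gives the $1/8$ failure bound. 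What you gain is an explicit threshold ($\alpha\ge 1+\ln 48$), no dependence on $\mathcal{F}$ being large, and no need for the auxiliary concentration lemmas; what the paper's approach buys is structural information about which \emph{channels} behave well, which is reused verbatim in the companion Lemma~\ref{lem:p_increase} and in the ``safe range'' Lemma~\ref{lem:safe}, so the machinery is not wasted overall.
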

\begin{proof}
To show the conclusion, we need to look at some execution details of the UIE Protocol. 
Note that there are two parts concerning randomness. One part is in the channel selection, 
and the other part is in the transmission selection. Consider the
channel selection part first,
in which a random instance $\sigma$ is a mapping from the $|A_t|$ active nodes to the $\mathcal{F}$ channels. 
Recall that the probability of successful transmission on a channel is
closely related to the total transmission probability of nodes
selecting this channel.
We call an instance \emph{fair} if under it there are at least 
least $\mathcal{F} \cdot 15/16$ channels such that on each of them 
the total transmission probability (of nodes selecting this channel)
is at least $\alpha \cdot 15/16$. By Lemma~\ref{lem:bin_ball}, 
a fraction of $1 - \exp\{-\Omega(\mathcal{F})\}$ of instances are fair.
We next consider such a fair instance $\sigma$.


Let $X_\sigma$ be the random variable that indicates the value of $\sum_{v\in A_{t+1}} p_{t+1}(v)$, 
conditioned on channel selection instance $\sigma$. 
Clearly, $X_{\sigma} \leq 2\sum_{v \in A_t} p_t(v)$, and for different instances $\sigma$, $X_\sigma$s are mutually independent.
For a channel $c$, if without confusion, we also use $c$ to denote the set of active nodes selecting channel $c$ in the instance $\sigma$. Denote by $X^c_\sigma$ the random variable that indicates the value of $\sum_{v\in c\cap A_{t+1}} p_{t+1}(v)$. Hence, $X_\sigma=\sum_{c}X^c_\sigma$.

Focus on a channel $c$ with $\sum_{v\in c\cap A_t} p_t(v) \geq 15\alpha/16$.
The probability that there is at least one transmission on channel $c$ is 
at least $1 - \exp\{-15\alpha/16\}$, by Equation~(\ref{eqn:idle}).
According to the UIE Protocol, the nodes that selected channel $c$
all halve their transmission probabilities if channel $c$ is not idle in round $t$. Hence, 
\begin{eqnarray*}
&&Pr[X^c_\sigma = \sum_{v\in c\cap A_t} \frac{p_t(v)}{2}| \sum_{v\in c\cap A_t} p_t(v) \geq \alpha\cdot \frac{15}{16}] \\
&\geq& 1 - \exp\{-\alpha \cdot \frac{15}{16}\},
\end{eqnarray*}
which is at least $31/32$ when $\alpha$ is large enough. 
Hence in expectation, there are at least $(31/32)\cdot (\mathcal{F} \cdot 15/16)$ channels $c$
with $X^c_\sigma = \sum_{v\in c\cap A_t} p_t(v)/2$. 

Note that once the instance $\sigma$ is given, the total transmission
probability $\sum_{v\in c\cap A_t} p_t(v)$ for each channel $c$ is
specified. Then for different channels, the random variables
$X^c_\sigma$s are mutually independent.
Hence, by the Chernoff bound in Lemma~\ref{lem:chernoff}, there are at least $(15/16)\cdot (\mathcal{F} \cdot15/16)$ channels
with $X^c_\sigma = \sum_{v\in c\cap A_t} p_t(v)/2$ with probability $1 - \exp\{-\Omega(\mathcal{F})\}$. 
Hence, with probability $1 - \exp\{-\Omega(\mathcal{F})\}$, 
\begin{equation*}
\begin{aligned}
X_\sigma &\leq \mathcal{F} \cdot \left(\frac{15}{16}\right)^2
\cdot\frac{15\alpha}{16}\cdot \frac{1}{2} + (\alpha \cdot \mathcal{F} - \alpha \cdot \mathcal{F} \cdot \left(\frac{15}{16}\right)^3) \cdot 2\\ 
&< \alpha \cdot \mathcal{F} \cdot 3/4.
\end{aligned}
\end{equation*}
Finally it holds that 
$Pr[\sum_{v\in A_{t+1}} p_{t+1}(v) \leq \alpha \cdot \mathcal{F} \cdot 3/4] 
\geq (1 - \exp\{-\Omega(\mathcal{F})\}) \cdot (1 - \exp\{-\Omega(\mathcal{F})\})$ 
which is at least $7/8$ for large $\mathcal{F}$. 
The last thing to note is in the above analysis we did
not consider the effect when an active node becomes \emph{inactive},
which only makes the summation decrease and hence is not harmful. 
\end{proof}

\begin{lemma}[Going down]\label{lem:most_down}
There exists a constant $\alpha'_2 > 1$, such that among $\gamma \log n$ rounds (not necessarily consecutive)
with $\sum_{v\in A_t} p_t(v) \geq \alpha'_2 \cdot \mathcal{F}$ and sufficiently large $\gamma > 0$,
there are at least $\frac{3}{4}\gamma \log n$ rounds
with $\sum_{v \in A_{t+1}} p_{t+1}(v) < \frac{3}{4}\sum_{v\in A_t} p_t(v)$, with probability $1 - O(n^{-1})$.
\end{lemma}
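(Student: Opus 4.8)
The plan is to apply a concentration argument over the $\gamma\log n$ rounds, treating each round as a Bernoulli trial that "succeeds" when $\sum_{v\in A_{t+1}} p_{t+1}(v) < \tfrac{3}{4}\sum_{v\in A_t} p_t(v)$. The key input is Lemma~\ref{lem:p_decrease}, which tells us that for each such round — provided the total transmission probability is $\alpha\cdot\mathcal{F}$ with $\alpha$ large enough — the success probability is at least $7/8$. So first I would fix the constant $\alpha_2'$ to be large enough that Lemma~\ref{lem:p_decrease} applies whenever $\sum_{v\in A_t}p_t(v)\ge\alpha_2'\cdot\mathcal{F}$; since all the rounds under consideration satisfy this hypothesis, each contributes a success with probability $\ge 7/8$, regardless of the history.

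Next I would set up the concentration. Index the $\gamma\log n$ rounds $t_1,\dots,t_{\gamma\log n}$ and let $B_j$ be the indicator that round $t_j$ is a success. The subtlety is that the $B_j$ are not independent — the configuration in round $t_{j+1}$ depends on what happened earlier. The clean way around this is to observe that Lemma~\ref{lem:p_decrease} gives a lower bound on the conditional success probability $\Pr[B_j=1\mid B_1,\dots,B_{j-1}] \ge 7/8$ that holds \emph{for every} history (because the only hypothesis needed is $\sum_{v\in A_{t_j}}p_{t_j}(v)\ge\alpha_2'\cdot\mathcal{F}$, which is assumed). Hence the sum $\sum_j B_j$ stochastically dominates a sum of $\gamma\log n$ i.i.d.\ Bernoulli$(7/8)$ variables, and I can apply the lower-tail Chernoff bound of Lemma~\ref{lem:chernoff} to that dominating sum. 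With mean $\mu = \tfrac{7}{8}\gamma\log n$ and deviation down to $\tfrac{3}{4}\gamma\log n = (1-\delta)\mu$ for $\delta = 1/7$, the bound gives $\Pr[\sum_j B_j < \tfrac{3}{4}\gamma\log n] \le \exp\{-\delta^2\mu\} = \exp\{-\tfrac{1}{49}\cdot\tfrac{7}{8}\gamma\log n\} = n^{-\Omega(\gamma)}$, which is $O(n^{-1})$ once $\gamma$ is a large enough constant. That yields the claimed at-least-$\tfrac{3}{4}\gamma\log n$ successful rounds with probability $1-O(n^{-1})$.

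The one point that needs a little care — and I expect it to be the main obstacle to a fully rigorous write-up — is justifying the stochastic-domination / martingale step cleanly, because the rounds are "not necessarily consecutive" and the adversarial-looking dependence between rounds must be handled correctly. The right framing is a supermartingale or a direct coupling: define the coupling so that $B_j$ is always at least as likely to be $1$ as an independent fair coin of bias $7/8$, conditioned on everything revealed so far; this is legitimate precisely because Lemma~\ref{lem:p_decrease}'s conclusion is a statement about a single round's transition with no assumption on the past beyond the safe-range-violation hypothesis that we are given holds in all $\gamma\log n$ rounds. Once that coupling is in place, the Chernoff estimate above finishes the proof, and the constant $\gamma$ is chosen at the end to absorb the hidden constant in $\Omega(\gamma)$ so the failure probability is $O(n^{-1})$.
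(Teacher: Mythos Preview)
Your proposal is correct and follows essentially the same approach as the paper: define Bernoulli indicators for the event $\sum_{v\in A_{t+1}}p_{t+1}(v)<\tfrac{3}{4}\sum_{v\in A_t}p_t(v)$, use Lemma~\ref{lem:p_decrease} to lower-bound each conditional success probability by $7/8$, and then apply a Chernoff-type tail bound to conclude that at least $\tfrac{3}{4}\gamma\log n$ of the $\gamma\log n$ rounds succeed with probability $1-O(n^{-1})$. If anything, you are more careful than the paper about the dependence issue: the paper simply invokes the Chernoff bound on the sum of the indicators, whereas you make the stochastic-domination/coupling argument explicit, which is the right way to justify that step.
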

\begin{proof}
Let $T := \gamma \log n$, and $X_t$ be the random variable that indicates
the value of $\sum_{v\in A_{t+1}} p_{t+1}(v) / \sum_{v\in A_t} p_t(v)$. 
Then by Lemma~\ref{lem:p_decrease}, it holds that $Pr[X_t \leq 3/4] \geq 7/8$.
Let $Y_t$ be the binary random variable that takes value $1$ if $X_t \leq 3/4$.   
Note that given $\sum_{v\in A_t} p_t(v) > \alpha'_2 \cdot \mathcal{F}$, $\mathbb{E}[Y_t] \geq 7/8$ always hold.
Hence, $\mathbb{E}[\sum_{t=1}^T Y_t] \geq T\cdot 7/8$, and it holds that 
$Pr[\sum_{t=1}^T Y_t \leq T\cdot 3/4] = O(n^{-1})$ by the Chernoff bound. 
\ignore{(\li{Here we actually
use the Chernoff bound with condition $\mathbb{E}[\Pi^T_{t=1} Y_t] \geq \Pi^T_{t=1} 7/8$})
}
That is, with probability $1 - O(n^{-1})$, there are at least $T \cdot 3/4$ rounds $t$ with 
$\sum_{v\in A_{t+1}} p_{t+1}(v) / \sum_{v\in A_t} p_t(v) \leq 3/4$, which completes the proof.
\end{proof}

\begin{lemma}[Fast adaptation]\label{lem:adaption}
There exists a constant $\alpha'_2 > 1$, such that during any period of $\gamma \log n$ rounds
with sufficiently large $\gamma > 0$, the probability that within the considered
period there is a round $t$ with $\sum_{v\in A_t} p_t(v) \leq \alpha'_2 \cdot \mathcal{F}$ is $1 - O(n^{-1})$.
\end{lemma}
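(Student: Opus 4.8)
The plan is to argue by contradiction. Suppose that throughout the entire window of $\gamma \log n$ rounds the total transmission probability stays strictly above $\alpha'_2 \cdot \mathcal{F}$, where $\alpha'_2 > 1$ is the constant furnished by Lemma~\ref{lem:most_down}; I will show that this forces the sum to have dropped below $\alpha'_2 \cdot \mathcal{F}$ by the last round of the window, a contradiction. Concretely, fix the period and assume $\sum_{v\in A_t} p_t(v) > \alpha'_2 \cdot \mathcal{F}$ for every round $t$ in it. Then the hypothesis of Lemma~\ref{lem:most_down} is met for all these $\gamma \log n$ rounds, so with probability $1 - O(n^{-1})$ at least $\frac{3}{4}\gamma\log n$ of them satisfy $\sum_{v\in A_{t+1}} p_{t+1}(v) < \frac{3}{4}\sum_{v\in A_t} p_t(v)$.

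For the remaining (at most $\frac{1}{4}\gamma\log n$) rounds I would use only the unconditional bound that the sum can at most double in one round: since $p_{t+1}(v) \le \min\{\zeta, 2p_t(v)\} \le 2p_t(v)$ for every surviving active node and nodes only leave $A_t$, we always have $\sum_{v\in A_{t+1}} p_{t+1}(v) \le 2\sum_{v\in A_t} p_t(v)$. Combining the two estimates multiplicatively over the window (their order is irrelevant), and starting from the trivial bound $\sum_{v\in A_t}p_t(v) \le n\zeta$ at the first round of the period, the sum at the end of the window is at most
\[
\left(\frac{3}{4}\right)^{\frac{3}{4}\gamma\log n}\cdot 2^{\frac{1}{4}\gamma\log n}\cdot n\zeta \;=\; c^{\,\gamma\log n}\, n\zeta,
\qquad c := \left(\frac{3}{4}\right)^{3/4} 2^{1/4} < 1 .
\]
Because $c < 1$, for $\gamma$ large enough the exponent of $n$ in $c^{\gamma\log n}\cdot n$ becomes negative, so $c^{\gamma\log n} n\zeta \to 0$ and is in particular smaller than $\alpha'_2 \le \alpha'_2 \cdot \mathcal{F}$ for all sufficiently large $n$ (recall $\mathcal{F} \ge 1$). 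This contradicts the standing assumption applied to the last round of the window. Hence, with probability $1 - O(n^{-1})$, the period contains a round $t$ with $\sum_{v\in A_t} p_t(v) \le \alpha'_2 \cdot \mathcal{F}$, which is exactly the claim; one takes $\gamma$ to be the larger of the value needed by Lemma~\ref{lem:most_down} and the value needed to make the exponent of $n$ negative.

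I expect the only real care needed here is bookkeeping, not ideas: I must check that Lemma~\ref{lem:most_down}'s hypothesis — the sum being $\ge \alpha'_2 \cdot \mathcal{F}$ on \emph{all} the considered rounds — is precisely what the contradiction hypothesis provides, and that the ``bad'' rounds (on which I only invoke the factor‑$2$ bound, and which need not be consecutive) are handled correctly, which follows from the multiplicativity of the per‑round factors. The probability accounting is immediate: the sole source of randomness invoked is Lemma~\ref{lem:most_down}, whose failure probability $O(n^{-1})$ is exactly the error term in the statement, so no additional union bound over rounds is required.
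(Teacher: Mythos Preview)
Your proposal is correct and follows essentially the same argument as the paper: assume the sum never drops below $\alpha'_2\cdot\mathcal{F}$, apply Lemma~\ref{lem:most_down} to get a factor-$3/4$ decrease on at least $3/4$ of the rounds and the trivial factor-$2$ bound on the rest, telescope the product against the initial bound $\sum_v p_0(v)\le n\zeta$, and obtain a contradiction for large $\gamma$. Your constant $c=(3/4)^{3/4}2^{1/4}=(27/32)^{1/4}$ is exactly the paper's $(27/32)^{T/4}$ written differently, so the two proofs coincide up to notation.
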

\begin{proof}
Denote $T := \gamma \log n$. 
Without loss of generality, assume that the period of $T$ rounds starts from $t = 1$ and ends at $t = T$, 
with $\sum_{v\in A_t} p_t(v) > \alpha'_2 \cdot \mathcal{F}$ always holds.
Note that 
\[\sum_{v\in A_T} p_T(v) = \sum_{v \in A_0} p_0(v) 
\cdot \Pi_{t=0}^{T-1} \frac{\sum_{v\in A_{t+1}} p_{t+1}(v)}{ \sum_{v\in A_t} p_t(v)}.\]
Then by Lemma~\ref{lem:most_down}, with probability at least $1 - O(n^{-1})$, it holds that 
\begin{equation}
\begin{aligned}
\sum_{v \in A_{T}} p_T(v) &\geq \sum_{v\in A_0} p_0(v) \cdot \left( \frac{3}{4} \cdot \frac{3}{4} 
\cdot \frac{3}{4} \cdot 2\right)^{\frac{T}{4}}\\
&=\sum_{v\in A_0} p_0(v) \cdot \left(\frac{27}{32}\right)^{\frac{T}{4}},
\end{aligned}
\end{equation}
where the first inequality holds by 
by coupling the ``evolution'' factors $\sum_{v\in A_{t+1}} p_{t+1}(v) / \sum_{v\in A_t} p_t(v)$. 
Since it holds that $\sum_{v\in A_0} p_0(v) < n$ 
and $T = \gamma \log n$, we know that $\sum_{v\in A_T} p_T(v)$ is at most $\alpha'_2\cdot \mathcal{F}$ for large enough $\gamma$.
\end{proof}

In the above, we have shown that the adaptation process of the total
transmission probability (from the initial state to
$\Theta(\mathcal{F})$) takes $O(\log n)$ rounds with high probability.
Meanwhile, we also showed that
when the total transmission probability increases beyond the upper bound
of the safe range, the total transmission probability of
active nodes shows a trend of going down. To finally show that in most of the
rounds, the safe range is satisfied, we still need to show that if the
total transmission probability of active nodes becomes very small, the
trend is that it will go up.


\begin{lemma}\label{lem:p_increase}
There exists $\alpha' \geq 0.01$ such that for any time $t$ with $\sum_{v\in A_t} p_t(v) = \alpha' \cdot \mathcal{F}$, 
it holds that 
\[Pr[\sum_{v\in A_{t+1}} p_{t+1}(v) \geq \alpha' \cdot \mathcal{F} \cdot \frac{4}{3}] \geq \frac{7}{8}.\]
\end{lemma}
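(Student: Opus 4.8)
The plan is to mirror the structure of Lemma~\ref{lem:p_decrease}, but now exploiting the \emph{idle}-detection branch of the protocol rather than the busy branch. Recall that when $\sum_{v\in A_t}p_t(v)=\alpha'\cdot\mathcal{F}$ with $\alpha'$ a small constant, the natural expectation is that most channels receive a small total transmission probability, are therefore idle with constant probability, and hence the nodes selecting them double their $p$-values. First I would fix a channel-selection instance $\sigma$ (a map from $A_t$ to the $\mathcal{F}$ channels) and, via Lemma~\ref{lem:bin_ball} applied with weights $p_t(v)\le\zeta$ and $\alpha=\alpha'$, condition on the event that at least $\mathcal{F}\cdot 31/32$ channels carry total transmission probability at most $2\alpha'$ — call such $\sigma$ ``light-fair'', and note this holds for a $1-\exp\{-\Omega(\mathcal{F})\}$ fraction of instances. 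On any such light channel $c$, by the left inequality in~(\ref{eqn:idle}) the channel is idle with probability at least $4^{-2\alpha'}$, a positive constant; when it is idle every node on $c$ sets $p_{t+1}(v)=\min\{\zeta,2p_t(v)\}$, so $\sum_{v\in c\cap A_{t+1}}p_{t+1}(v)=2\sum_{v\in c\cap A_t}p_t(v)$ provided no node on $c$ was already at the cap $\zeta$.

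Next I would define, for a light-fair $\sigma$, the random variable $X_\sigma:=\sum_{v\in A_{t+1}}p_{t+1}(v)=\sum_c X^c_\sigma$, where the $X^c_\sigma$ are independent across channels once $\sigma$ is fixed. For each light channel $c$ we have $X^c_\sigma=2\sum_{v\in c\cap A_t}p_t(v)$ with probability at least the constant $4^{-2\alpha'}$, and otherwise $X^c_\sigma\ge \tfrac12\sum_{v\in c\cap A_t}p_t(v)$ (halving is the worst case, and a successful single transmission only removes at most one small $p$-value). Taking expectations over the channel randomness conditioned on $\sigma$, and then a Chernoff bound over the $\mathcal{F}$ independent channels, I would conclude that with probability $1-\exp\{-\Omega(\mathcal{F})\}$ a constant fraction of the light channels actually double, which pushes $X_\sigma$ up to at least $\tfrac{4}{3}\alpha'\cdot\mathcal{F}$ — the constants $\alpha'$ and the target fraction are chosen so that the arithmetic closes, exactly as in Lemma~\ref{lem:p_decrease}. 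Finally, integrating over the $1-\exp\{-\Omega(\mathcal{F})\}$ mass of light-fair $\sigma$ and using $\mathcal{F}$ non-constant gives $Pr[\sum_{v\in A_{t+1}}p_{t+1}(v)\ge \tfrac{4}{3}\alpha'\cdot\mathcal{F}]\ge 7/8$.

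The main obstacle I anticipate is the $\min\{\zeta,2p_t(v)\}$ cap: doubling is not exact if many active nodes on light channels already sit at $p_t(v)=\zeta$. The clean way around this is to observe that the lemma only needs to be invoked when $\sum_{v\in A_t}p_t(v)=\alpha'\cdot\mathcal{F}$ is \emph{small}, so with $|A_t|\ge\mathcal{F}\log n$ active nodes the \emph{average} $p$-value is at most $\alpha'/\log n\ll\zeta$; a counting/Markov argument then shows that on all but a vanishing fraction of the light channels, the total capped mass is negligible, so replacing ``doubles'' by ``at least $3/2$ times'' on each good light channel still yields the factor $4/3$ overall. Alternatively — and this is probably how the authors proceed — one simply absorbs the capping loss into the choice of constants, noting it can only reduce $X_\sigma$ on a $\exp\{-\Omega(\mathcal{F})\}$-small set of channels once one also conditions (again via a balls-in-bins bound) on no light channel being overloaded with near-$\zeta$ balls. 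Either route reduces the statement to the same Chernoff-over-channels computation used twice already in this section, so beyond handling the cap the proof is routine.
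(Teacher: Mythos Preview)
Your approach is essentially the paper's: condition on a ``fair'' balls-in-bins outcome for the channel selection via Lemma~\ref{lem:bin_ball}, lower-bound the idle probability on the good channels by $4^{-2\alpha'}$ using~(\ref{eqn:idle}), apply Chernoff across the (conditionally independent) channels, and integrate over $\sigma$.

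Two small differences in execution are worth noting. First, the paper does not leave the idle probability as ``some constant'' to be absorbed later; it fixes $\alpha'$ close to $0.01$ precisely so that $4^{-2\alpha'}\ge 31/32$, which after Chernoff gives at least $(15/16)^2\mathcal{F}$ doubling channels and lets the arithmetic close explicitly (it also uses the \emph{lower} bound $15\alpha'/16$ on each good channel's weight, not just the upper bound $2\alpha'$ you invoke). Second, the $\min\{\zeta,2p_t(v)\}$ cap you spend a paragraph on is never addressed in the paper --- it simply writes $X^c_\sigma=2\sum_{v\in c}p_t(v)$ on idle channels --- but the paper \emph{does} subtract an explicit loss term $\zeta\cdot\mathcal{F}\cdot 17/16$ to account for up to $\mathcal{F}/16$ nodes becoming inactive in Slot~2 and one more in Slot~4, a point you only mention in passing; this is where ``$\zeta$ small enough'' is actually used.
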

\begin{proof}
By Lemma~\ref{lem:bin_ball}, there is a fraction of $1 - \exp\{-\Omega(\mathcal{F})\}$ of 
channel selection instances in which on at least $\mathcal{F} \cdot 15/16$ channels the total transmission probability of active nodes selecting the channel is between $\alpha' \cdot 15/16$ and $\alpha' \cdot 2$ (i.e.\ the \emph{fair} instances). 
Consider such an instance $\sigma$.

Let $X_\sigma$ be the random variable that indicates the value of $\sum_{v\in A_{t+1}} p_{t+1}(v)$, 
conditioned on channel selection $\sigma$. Clearly, $X_\sigma \leq \sum_{v\in A_t} 2\cdot p_t(v)$, and 
$X_\sigma$'s with distinct $\sigma$'s are independent. For a channel $c$, if without confusion, we also use $c$ to denote the set of active nodes selecting $c$. Let $X^c_\sigma$ denote the random
variable that indicates the value of $\sum_{v \in c\cap A_{t+1}} p_{t+1}(v)$. Hence, $X_\sigma=\sum_{c}X^c_\sigma$.
We consider a channel $c$ with $\alpha'\cdot 15/16 \leq \sum_{v\in c\cap A_t} p_t(v) \leq \alpha' \cdot 2$.
The probability that there are no transmissions on channel $c$ is at least $4^{-2\alpha'}$. 
Hence $Pr[X^c_\sigma = \sum_{v \in c\cap A_t} 2\cdot p_t(v) | \alpha'\cdot 15/16 
\leq \sum_{v\in c\cap A_t} p_t(v)\leq \alpha'\cdot 2] \geq 4^{-2\alpha'}$ 
which is at least $31/32$ when $\alpha'$ is close to $0.01$. 
Hence, in expectation there are at least $(31/32)\cdot (\mathcal{F} \cdot 15/16)$ channels $c$ with
$X^c_\sigma = \sum_{v\in c\cap A_t} 2 \cdot p_t(v)$.

Note that once the instance $\sigma$ is given, the total transmission probability of nodes $\sum_{v\in c\cap A_t} p_t(v)$ 
on each channel $c$ is specified. Then for different channels, the
random variables $X^c_\sigma$s are mutually independent. Hence, by the Chernoff bound, with probability 
$1 - \exp\{-\Omega(\mathcal{F})\}$ there are at least $(15/16)\cdot (\mathcal{F} \cdot 15/16)$ channels $c$ 
with $X^c_\sigma = \sum_{v\in c\cap A_t} 2 \cdot p_t(v)$. Hence, with probability $1 - \exp\{-\Omega(\mathcal{F})\}$,
$X_\sigma \geq (\mathcal{F} \cdot 15^2/16^2)\cdot (15\alpha'/16) \cdot 2 + (\alpha'\cdot \mathcal{F} 
- \alpha'\cdot \mathcal{F} \cdot 15^3/16^3) /2 - \zeta \cdot \mathcal{F} \cdot 17/16$, 
where the loss of weight $\zeta \cdot \mathcal{F} \cdot 17/16$ is due
to those active nodes switching to the \emph{inactive} state: at most $\mathcal{F} / 16$ active nodes
become \emph{inactive} in the second slot, and at most $1\leq \mathcal{F}$ active nodes become \emph{inactive} 
in the $4$th slot. When $\zeta$ is small enough, $\zeta \cdot \mathcal{F} \cdot 17/16$ is very small 
compared to $\mathcal{F} \cdot \alpha'$, and hence $X_\sigma \geq \mathcal{F} \cdot \alpha' \cdot 4/3$ (with probability 
$1 - \exp\{\Omega(\mathcal{F})\}$).
Finally it holds that 
$Pr[\sum_{v\in A_{t+1}} p_{t+1}(v) \geq \alpha'\cdot \mathcal{F} \cdot 4/3]
\geq (1-\exp\{-\Omega(\mathcal{F})\})\cdot (1-\exp\{-\Omega(\mathcal{F})\})$
which is at least $7/8$ for large $\mathcal{F}$.
\end{proof}

\begin{lemma}[Going up]\label{lem:most_up}
There exists a constant $\alpha'_1 > 0$, such that among $\gamma \log n$ 
rounds (not necessarily consecutive)
with $\sum_{v\in A_t} p_t(v) \leq \alpha'_1 \cdot \mathcal{F}$ and sufficiently large $\gamma > 0$, 
there are at least $\frac{3}{4}\gamma\log n$ rounds
with $\sum_{v\in A_{t+1}} p_{t+1}(v) \geq \frac{4}{3}\sum_{v\in A_t} p_t(v)$, with probability $1 - O(n^{-1})$.
\end{lemma}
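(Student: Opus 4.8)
The plan is to mirror the proof of Lemma~\ref{lem:most_down}, with the one-round increase estimate of Lemma~\ref{lem:p_increase} playing the role the one-round decrease estimate of Lemma~\ref{lem:p_decrease} played there. Set $T := \gamma\log n$ and let $\alpha'_1$ be the constant $\alpha'$ furnished by Lemma~\ref{lem:p_increase}. Among the $T$ rounds under consideration, i.e.\ those with $\sum_{v\in A_t} p_t(v) \leq \alpha'_1\cdot\mathcal{F}$, let $X_t$ be the random variable equal to $\sum_{v\in A_{t+1}} p_{t+1}(v) / \sum_{v\in A_t} p_t(v)$, and let $Y_t$ be the binary random variable that takes value $1$ exactly when $X_t \geq 4/3$. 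The goal is to show that $\sum_t Y_t \geq T\cdot 3/4$ holds with probability $1 - O(n^{-1})$, which is precisely the assertion of the lemma.

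The first step is to establish that $\mathbb{E}[Y_t] \geq 7/8$ for every such round $t$ (conditioned on the execution up to round $t$). Lemma~\ref{lem:p_increase} supplies this at the single value $\sum_{v\in A_t} p_t(v) = \alpha'_1\cdot\mathcal{F}$; I would extend it to the whole sub-range $\sum_{v\in A_t} p_t(v) \leq \alpha'_1\cdot\mathcal{F}$ by a monotonicity (coupling) argument. Lowering the total transmission probability only raises the per-channel idle probability, since $\prod_{v\in c}(1-p_t(v)) \geq 4^{-\sum_{v\in c} p_t(v)}$ by Equation~(\ref{eqn:idle}), so it can only increase the number of active nodes that double their probability, while the loss caused by nodes switching to \emph{inactive} stays bounded by $O(\zeta\cdot\mathcal{F})$ exactly as in the proof of Lemma~\ref{lem:p_increase}; hence the same computation yields $Pr[X_t \geq 4/3] \geq 7/8$ throughout the range. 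As long as $\sum_{v\in A_t} p_t(v) \geq 0.01\cdot\mathcal{F}$, Lemma~\ref{lem:bin_ball} (equivalently Corollary~\ref{cor:bin_ball}) still applies to produce the \emph{fair} channel-selection instances needed by that computation; for the remaining, even more favorable regime $\sum_{v\in A_t} p_t(v) < 0.01\cdot\mathcal{F}$ I would instead argue directly, bounding by a Markov-type estimate the total transmission probability that is halved in round $t$ and concluding that the sum roughly doubles, which is stronger than the required factor $4/3$.

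The second step is a Chernoff bound over the $T$ rounds, identical in form to the one in Lemma~\ref{lem:most_down}: since $\mathbb{E}[Y_t] \geq 7/8$ for every round $t$ in question (conditioned on the past, so that $\mathbb{E}[\prod_{t=1}^T Y_t] \geq (7/8)^T$), we obtain $\mathbb{E}[\sum_{t=1}^T Y_t] \geq T\cdot 7/8$ and $Pr[\sum_{t=1}^T Y_t \leq T\cdot 3/4] = O(n^{-1})$ once the constant $\gamma$ is large enough. On this event at least $(3/4)\gamma\log n$ of the considered rounds satisfy $\sum_{v\in A_{t+1}} p_{t+1}(v) \geq (4/3)\sum_{v\in A_t} p_t(v)$, which is the claim.

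I expect the main obstacle to be the first step: pushing the one-round increase estimate from the boundary value $\alpha'_1\cdot\mathcal{F}$ down to all smaller totals, and in particular handling the regime where $\sum_{v\in A_t} p_t(v)$ drops below the constant $0.01\cdot\mathcal{F}$ for which Lemma~\ref{lem:bin_ball} was stated, where the balls-into-bins machinery no longer applies and a separate, more elementary node-by-node argument is required. Everything afterwards --- the indicator bookkeeping and the Chernoff step --- is routine and follows the template already established for Lemma~\ref{lem:most_down}.
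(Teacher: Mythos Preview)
Your proposal is correct and follows essentially the same approach as the paper: define $X_t$ as the one-round ratio, set $Y_t=\mathbf{1}\{X_t\ge 4/3\}$, invoke Lemma~\ref{lem:p_increase} to get $\mathbb{E}[Y_t]\ge 7/8$, and finish with a Chernoff bound over the $T=\gamma\log n$ rounds. The paper's proof is only a few lines and simply asserts that ``given $\sum_v p_t(v)<\alpha'_1\cdot\mathcal{F}$, $\mathbb{E}[Y_t]\ge 7/8$ always holds,'' without addressing the two points you raise --- extending Lemma~\ref{lem:p_increase} from the single boundary value $\alpha'_1\cdot\mathcal{F}$ to the whole sub-range, and handling totals below the $0.01\cdot\mathcal{F}$ threshold required by Lemma~\ref{lem:bin_ball}. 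Your monotonicity/coupling sketch and the direct argument for the very-low regime are reasonable ways to fill those gaps, so if anything your write-up is more complete than the paper's own proof.
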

\begin{proof}
Let $T := \gamma \log n$, and $X_t$ be the random variable that indicates the value of 
$\sum_{v\in A_{t+1}} p_{t+1}(v) / \sum_{v\in A_t} p_t(v)$. Then by Lemma~\ref{lem:p_increase}, 
it holds that $Pr[X_t \geq 4/3] \geq 7/8$.\\
\indent Let $Y_t$ be the binary random variable that takes value $1$ if $X_t \geq 2$. Note that
given $\sum_v p_t(v) < \alpha'_1 \cdot \mathcal{F}$, $\mathbb{E}[Y_t]
\geq 7/8$ always holds.
Hence, $\mathbb{E}[\sum^T_{t=1} Y_t] \geq T\cdot 7/8$, and it holds that $Pr[\sum^T_{t=1} Y_t \leq T\cdot 3/4] = O(n^{-1})$
by the Chernoff bound. That is, with probability $1 - O(n^{-1})$, there are at least 
$T \cdot 3/4$ rounds $t$ with $\sum_{v\in A_{t+1}} p_{t+1}(v) / \sum_{v\in A_t} p_t(v) \geq 4/3$.
\end{proof}

Now we are ready to show that in most of the rounds after the adaptation
process, the total transmission probability of active nodes is in the
safe range.

\begin{lemma}[Stable]\label{lem:stable}
Let $t_0$ be the first round in which $\sum_{v\in A_t} p_{t_0}(v)$ drops below $\alpha'_2 \cdot \mathcal{F}$. 
In the subsequent $T:=\tau \cdot \log n$ rounds where $\tau > 0$ and $n$ are large enough,
the following hold:

$(i)$ \textbf{hardly going high:} there are at least $T \cdot 3/4$ rounds $t$ with
$\sum_{v \in A_t} p_{t}(v) \leq \alpha_2 \cdot \mathcal{F}$, where $\alpha_2 > \alpha'_2$ is a constant.

$(ii)$ \textbf{hardly going low:} there are at least $T \cdot 3/4$ rounds $t$ with
$\sum_{v\in A_t} p_{t}(v) \geq \alpha_1 \cdot k$, where $\alpha_1 < \alpha'_1$ is a constant.

\end{lemma}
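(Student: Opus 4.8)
The plan is to track the normalized log‑potential $L_t:=\log_2\!\big(S_t/\mathcal F\big)$, where $S_t:=\sum_{v\in A_t}p_t(v)$, and to prove $(i)$ and $(ii)$ by mirror‑image arguments; set $a:=\log_2\alpha'_2$ and $a_1:=\log_2\alpha'_1$, work under the standing hypothesis $|A_t|\ge\mathcal F\log n$ of this subsection, and keep $\zeta$ small relative to the constants introduced below. Three facts feed in. One‑step bounds: $L_{t+1}\le L_t+1$ always (each $p(v)$ at most doubles), and $L_{t+1}\ge L_t-2$ whenever $S_t\ge\alpha_1\mathcal F$ (each $p(v)$ at most halves, and the at most $\mathcal F+1$ nodes that turn inactive lose only $O(\zeta\mathcal F)$ total mass, negligible against $S_t$ for $\zeta$ small). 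Downward drift: by Lemma~\ref{lem:p_decrease} (whose proof only improves when the total probability is larger), $Pr[L_{t+1}\le L_t-\log_2\tfrac43\mid\mathcal H_t]\ge\tfrac78$ whenever $S_t\ge\alpha'_2\mathcal F$, with $\mathcal H_t$ the history through round $t$. Upward drift: by Lemma~\ref{lem:p_increase} (whose proof only uses a one‑sided bound on the per‑channel mass and only improves when the total probability is smaller, so it applies whenever $S_t\le\alpha'_1\mathcal F$), $Pr[L_{t+1}\ge L_t+\log_2\tfrac43\mid\mathcal H_t]\ge\tfrac78$ in that regime. So $L_t$ is a bounded‑increment walk with a constant drift pulling it toward the band $[a_1,a]$.

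For $(i)$ I would first establish a per‑round tail bound: for each fixed $t^*\in[t_0,t_0+T)$ and each integer $\ell\ge1$, $Pr[S_{t^*}\ge2^{\ell}\alpha'_2\mathcal F]\le c_1e^{-c_2\ell}$ for absolute constants $c_1,c_2>0$. Call a round $r$ \emph{bad} if $S_r\ge\alpha'_2\mathcal F$ but $L_{r+1}>L_r-\log_2\tfrac43$, so $Pr[r\text{ bad}\mid\mathcal H_r]\le\tfrac18$ and, for any fixed window of length $m$, the number of bad rounds in it is stochastically dominated by a $\mathrm{Binomial}(m,\tfrac18)$ (Lemma~\ref{lem:chernoff}). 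If $S_{t^*}\ge2^{\ell}\alpha'_2\mathcal F$, let $s<t^*$ be the last round with $S_s<\alpha'_2\mathcal F$ (it exists because $S_{t_0}<\alpha'_2\mathcal F$); every round of $(s,t^*)$ then has $S\ge\alpha'_2\mathcal F$, and multiplying the one‑step ratios shows that this window, of length $m:=t^*-1-s$, must contain at least $c_3m$ and at least $c_4\ell$ bad rounds simultaneously, where $c_3=\log_2\tfrac43/\log_2\tfrac83>\tfrac18$ and $c_4>0$. Since $c_3>\tfrac18$, a Chernoff bound for each admissible length $m$ (the right endpoint $t^*-1$ being fixed) followed by a union over $m$ gives the claimed decay. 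Now fix a constant $\ell_0$ with $c_1e^{-c_2\ell_0}\le\tfrac18$ and set $\alpha_2:=2^{\ell_0}\alpha'_2$, a constant; then $\mathbb E\big[\#\{t^*\in[t_0,t_0+T):S_{t^*}>\alpha_2\mathcal F\}\big]\le T/8$.

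Turning this into the claimed high‑probability count $\le T/4$ is the step I expect to be the main obstacle: the \emph{very‑high} indicators $\mathbf 1[S_t>\alpha_2\mathcal F]$ are $\mathcal H_t$‑measurable and strongly dependent, so the obvious martingale has unbounded differences, and because $T=\Theta(\log n)$ there is little room. My plan is to exploit the regeneration structure. By Lemma~\ref{lem:adaption} and a union bound over the $O(T)$ length‑$(\gamma\log n)$ windows, w.h.p.\ $[t_0,t_0+T)$ partitions into cycles $C_1,C_2,\dots$, each a single round with $S<\alpha'_2\mathcal F$ followed by a high run of length $<\gamma\log n$; inside a cycle $(L_r-a)^+$ is dominated, given the cycle's starting state, by a negative‑drift bounded‑increment walk, so by the maximal inequality for the supermartingale $e^{\theta(L_r-a)}$ (small constant $\theta>0$) the occupation time $N_j$ of $\{S_r\ge\alpha_2\mathcal F\}$ in cycle $j$ has an exponentially small conditional tail and $\mathbb E[N_j\mid\mathcal H_{\min C_j}]\le c_1'e^{-c_2'\ell_0}$. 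Summing gives $\mathbb E[\sum_jN_j]\le T/8$, and a Freedman/Azuma bound over the cycles — after truncating each $N_j$ at a suitable level, valid on the w.h.p.\ event above — would yield $\sum_jN_j\le T/4$ w.h.p.\ once $\tau$ is a large enough constant; equivalently one may cite the standard concentration of additive functionals under the geometric‑drift condition $\mathbb E\big[e^{\lambda(L_{t+1}-a)^+}\mid\mathcal H_t\big]\le\rho\,e^{\lambda(L_t-a)^+}+K$ with $\rho<1$, which follows from the bounded increments and the downward drift. Reconciling the $\Theta(\log n)$ horizon with the per‑cycle fluctuations (hence pinning down how large $\tau$ must be) is the delicate point here.

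Part $(ii)$ is identical with \emph{high}/\emph{low}, Lemma~\ref{lem:p_decrease}/Lemma~\ref{lem:p_increase}, and Lemma~\ref{lem:most_down}/Lemma~\ref{lem:most_up} interchanged: dropping to $S_{t^*}\le2^{-\ell}\alpha'_1\mathcal F$ forces an abnormally dense window of \emph{bad‑low} rounds (rounds where $S$ shrinks while $S\le\alpha'_1\mathcal F$) in the upward‑drift regime, giving $Pr[S_{t^*}\le2^{-\ell}\alpha'_1\mathcal F]\le c_1e^{-c_2\ell}$; one then sets $\alpha_1:=2^{-\ell_0}\alpha'_1$ and runs the same cycle argument with going‑up and going‑down reversed, keeping $\zeta$ small enough relative to $\alpha_1$ so that Lemma~\ref{lem:bin_ball}, Corollary~\ref{cor:bin_ball}, and hence Lemma~\ref{lem:safe}, still apply at load $\alpha_1$. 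Intersecting $(i)$ and $(ii)$ finally shows that the safe range $[\alpha_1\mathcal F,\alpha_2\mathcal F]$ holds in at least $T/2$ of the $T$ rounds, which is the form needed to feed into Lemma~\ref{lem:safe}.
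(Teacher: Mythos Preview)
Your approach is plausible but takes a markedly heavier route than the paper, and the step you flag as ``the main obstacle'' is exactly the one the paper sidesteps entirely.

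The paper does not establish a per-round tail bound, does not introduce cycles, and invokes no Freedman/supermartingale machinery. Instead it fixes $\alpha_2:=2^{10}\alpha'_2$, defines a \emph{wave} as a maximal excursion above $\alpha'_2\mathcal F$ of length $>20$, and observes (deterministically, from the one-step doubling bound) that any round with $S_t>\alpha_2\mathcal F$ must lie inside some wave. The probabilistic input is then a \emph{single} application of Lemma~\ref{lem:most_down}: if the number $T'$ of rounds with $S_t>\alpha_2\mathcal F$ were at least $T/4\ge\gamma\log n$, then with probability $1-O(n^{-1})$ at least $3T'/4$ of those rounds satisfy $X_t\le3/4$. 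The remainder of the argument is deterministic: within any single wave $[t_1,t_2]$, having a $3/4$ fraction of rounds with $X_t\le3/4$ forces $S_{t_2}\le S_{t_1}\cdot(27/32)^{(t_2-t_1+1)/4}<S_{t_1}/2\le\alpha'_2\mathcal F$, contradicting the definition of a wave. Part~$(ii)$ mirrors this with \emph{holes} and Lemma~\ref{lem:most_up}.

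So the concentration you worry about---aggregating dependent indicators $\mathbf 1[S_t>\alpha_2\mathcal F]$ over a $\Theta(\log n)$ horizon---is handled not by bounding those indicators directly but by applying the already-proved concentration (Lemmas~\ref{lem:most_down}/\ref{lem:most_up}) to the \emph{ratio} indicators $\mathbf 1[X_t\le3/4]$, which have a uniform $7/8$ conditional probability regardless of history; the contradiction is then purely combinatorial. Your drift/regeneration machinery would likely go through after the truncation you sketch, but it buys nothing here: the paper's wave argument is shorter, needs no new inequalities, and makes the role of $\tau$ transparent ($\tau>4\gamma$ suffices).
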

\begin{proof} 
We prove the two conclusions one by one.

\textbf{Proof for ``hardly going high''.}
Consider the period from $t = t_0$ to $t = t_0+T$. Define a \emph{wave} to be 
an interval $[t_1, t_2]$ with $t_2 > t_1 + 19$, such that for rounds $t \in[t_1,t_2]$ 
it holds that $\sum_v p_t(v) > \alpha'_2 \cdot \mathcal{F}$, and for rounds $t = t_1 - 1, t_2 + 1$ it holds
that $\sum_{v\in A_t} p_t(v) \leq \alpha'_2 \cdot \mathcal{F}$. Then for any round $t$ not in a wave, 
$\sum_{v\in A_t} p_t(v)$ is at most $\alpha_2 \cdot \mathcal{F}$ where $\alpha_2 := \alpha'_2 \cdot 2^{10}$.

Assume there are at least $T\cdot 1/4$ rounds $t$ with $\sum_{v \in A_t} p_t(v) > \alpha_2 \cdot \mathcal{F}$.
Otherwise, the lemma holds. Let $\mathcal{A}$ denote the event that the assumption is true. 
Next, we show that $\mathcal{A}$ will never happen when $n$ is large enough. 
Let $T'$ be the number of rounds $t$ with $\sum_{v\in A_t} p_t(v) > \alpha_2 \cdot \mathcal{F}$.
Clearly, these rounds are all on waves, and by the assumption, $T^{'}\geq \frac{1}{4}T$.
Let $\mathcal{B}$ denote the event that among all these rounds, 
there are $T'\cdot 3/4$ rounds $t$ with $X_t \leq 3/4$. Recall that
$X_t$ is the random variable that takes value $\sum_{v\in A_{t+1}} p_{t+1}(v) / \sum_{v\in A_t} p_t(v)$. 
Assume that $\tau > 4 \gamma$, where $\gamma$ is from Lemma~\ref{lem:most_down}.
Then $T' > \gamma\log n$, and hence by Lemma~\ref{lem:most_down}, 
it holds that $Pr[\mathcal{B}|\mathcal{A}] = 1 - O(n^{-1})$, which is positive when $n$ is large enough. 
However, as shown in the following argument, events $\mathcal{B}$ and
$\mathcal{A}$ do not happen together, which leads to the conclusion
that $\mathcal{A}$ will never happen when $n$ is large enough.

Now we show that $\mathcal{B}$ and $\mathcal{A}$ do not happen together. Actually, it is sufficient
to show that $\mathcal{B}$ will not happen. Recall that event $\mathcal{B}$ happens meaning that
a fraction of $3/4$ rounds in waves satisfy $X_t \leq 3/4$. To show this is impossible, 
we focus on a single wave $[t_1, t_2]$, and prove that among these $t_2 - t_1 + 1$ rounds, 
there are less than $(t_2 - t_1 + 1) \cdot 3/4$ rounds $t$ with $X_t \leq 3/4$.
Assume the opposite, and then the value of $\sum_{v\in A_{t_2}} p_{t_2}(v)$ is at most $\sum_{v\in A_{t_1}} p_{t_1}(v) 
\cdot \left(27/32\right)^{(t_2 - t_1 + 1)/4}$ (using the coupling technique). Recalling that
in a wave $t_2 - t_1 + 1 > 20$, we have $\sum_{v\in A_{t_2}} p_{t_2}(v) < \sum_{v\in A_{t_1}} p_{t_1}(v) 
\cdot \left(27/32\right)^5 < \sum_{v\in A_{t_1}} p_{t_1}(v)/2$. Since in round $t = t_1 - 1$, 
$\sum_{v\in A_t} p_t(v) < \alpha'_2 \cdot \mathcal{F}$,
which implies that $\sum_{v\in A_{t+1}} p_{t_1}(v) \leq 2 \alpha'_2 \cdot \mathcal{F}$. Hence, 
$\sum_{v\in A_{t_2}} p_{t_2}(v) < \alpha'_2 \cdot \mathcal{F}$,
which contradicts the definition of the wave. Hence the assumption does
not hold, which completes the proof.

\textbf{Proof for ``hardly going low''.}
Consider the period from $t = t_0$ to $t = t_0+T$. Define a \emph{hole} to be an 
interval $[t_1, t_2]$ with $t_2 > t_1 + 19$, such that for rounds $t = t_1, \ldots, t_2$ it holds
that $\sum_v p_t(v) < \alpha'_1 \cdot \mathcal{F}$, and for rounds $t = t_1 - 1, t_2 + 1$ it holds that 
$\sum_v p_t(v) \geq \alpha'_1 \cdot \mathcal{F}$. Then for any round $t$ not in a hole, $\sum_v p_t(v)$ 
is at least $\alpha_1 \cdot k$ where $\alpha_1 := \alpha'_1 / 2^{10}$.

Assume there are at least $T \cdot 1/4$ rounds $t$ with $\sum_{v\in A_t} p_t(v) < \alpha_1 \cdot \mathcal{F}$.
Otherwise, the lemma holds. Let $\mathcal{A}$ denote the event that the assumption is true. Next, we show
that $\mathcal{A}$ will never happen when $n$ is large enough.
Let $T'$ be the number of rounds $t$ with $\sum_v p_t(v) < \alpha_1 \cdot \mathcal{F}$. 
Clearly, these rounds are all in holes, and by the assumption, $T'\geq\frac{1}{4}T$. Let $\mathcal{B}$ denote the event that among all these rounds,
there are $T'\cdot 3/4$ rounds $t$ with $X_t \geq 4/3$. Recall that $X_t$ is 
the random variable that takes value $\sum_{v\in A_{t+1}} p_{t+1}(v) / \sum_{v\in A_t} p_t(v)$.
Assume that $\tau > 4\gamma$, where $\gamma$ is from Lemma~\ref{lem:most_up}. 
Then $T' > \gamma \log n$, and hence by Lemma~\ref{lem:most_up}, it holds 
$Pr[\mathcal{B} | \mathcal{A}] = 1 - O(n^{-1})$, which is positive when $n$ is large enough. However, 
as shown in the following argument, $\mathcal{B}$ and $\mathcal{A}$ do
not happen together, which leads to the conclusion
that $\mathcal{A}$ will never happen when $n$ is large enough. 

Now we show that $\mathcal{B}$ and $\mathcal{A}$ never happen together. Actually, it is
sufficient if we show $\mathcal{B}$ never happen. Recall that if event $\mathcal{B}$ happens, it means a fraction
of $3/4$ of the considered rounds satisfy $X_t \geq 4/3$. To show this
is impossible, we focus on a single
hole $[t_1, t_2]$, and prove that among these $t_2 - t_1 + 1$ rounds, there are less than 
$(t_2 - t_1 + 1)\cdot 3/4$ rounds $t$ with $X_t \geq 4/3$.
Assume the opposite, and then the value of $\sum_{v\in A_{t_2}} p_{t_2}(v)$ is at least 
$\sum_{v\in A_{t_1}} p_{t_1}(v)\cdot (32/27)^{(t_2 - t_1 + 1)/4}$ (using the coupling technique). 
Recalling that in a hole $t_2 - t_1 + 1 > 20$, we have 
$\sum_{v\in A_{t_2}} p_{t_2}(v) > \sum_{v\in A_{t_1}} p_{t_1}(v) \cdot (32/27)^5 > \sum_{v\in A_{t_1}} 2\cdot p_{t_1}(v)$. 
Since at $t = t_1 - 1$, $\sum_{v\in A_t} p_t(v) > \alpha'_1 \cdot \mathcal{F}$, 
which implies that $\sum_{v\in A_{t_1}} p_{t_1}(v) \geq \alpha'_1 \cdot \mathcal{F}/2$.
Hence, $\sum_{v\in A_{t_2}} p_{t_2}(v) > \alpha'_1 \cdot \mathcal{F}$, 
which contradicts the definition of the hole. Hence, the hypothesis does
not hold, which completes the proof.
\end{proof}

Now, we are ready to prove Lemma~\ref{lem:multi}. 

\textbf{\emph{Proof of} Lemma~\ref{lem:multi}}. At first, recall that 
by Lemma~\ref{lem:safe}, in any round $t$ with $|A_t| \geq \mathcal{F} \cdot \log n$
and $\alpha_1 \cdot \mathcal{F} \leq \sum_{v\in A_t} p_t(v) \leq \alpha_2 \cdot \mathcal{F}$, there exists constants $0 < c_1, c_2 < 1$ such that 
with probability at least $c_1$ there are $c_2 \cdot \mathcal{F}$ active nodes switching to the 
\emph{inactive} state. 

Define $T_1$ as the first round $t$  
such that the summation $\sum_{v \in A_t} p_t(v)$ drops below $\alpha_2 \cdot k$. 
By Lemma~\ref{lem:adaption}, we know that $T_1 = O(\log n)$. After $T_1$, by applying 
Lemma~\ref{lem:stable} it follows that for any period of length at least
$T' := \max\{2\cdot k/ (\mathcal{F} \cdot c_1 \cdot c_2), \tau \cdot \log n\}$, 
with high probability, there are $T'/2$ rounds $t$ in which $\sum_{v \in A_t} p_t(v)$ 
is between $\alpha_1 \cdot \mathcal{F}$ and $\alpha_2 \cdot \mathcal{F}$. 
Then we know that for large enough $\tau > 0$, with high probability there is 
a round $t < T_1 + T'$ that satisfies $|A_t| < \mathcal{F} \cdot \log n$.
Otherwise, based on the above argument and using the Chernoff bound, it is easy to show that up to round $T_1 + T'$, there 
are more than $k$ active nodes switching to the \emph{inactive} state with high
probability, which is impossible.

Hence, there exists constant $\gamma' > 0$ with $T := \gamma'(\log n + k/\mathcal{F}) \geq T_1 + T'$, 
such that with high probability there is a round $t\leq T$ that satisfies 
$|A_t| < \mathcal{F} \cdot \log n$. Recall that we assume $k \geq \mathcal{F}\cdot \log n$ (otherwise,
we can ignore this section and only consider the analysis in Section~\ref{sec:primary}), 
which implies $T = O(k/\mathcal{F})$. \qed

\subsection{Efficiency of the Primary Channel}\label{sec:primary}

In this section, we analyze the ``second part'' of the algorithm execution: the execution after the round when the number of active nodes drops
below $\mathcal{F} \cdot \log n$.  The conclusion is summarized in
Lemma~\ref{lem:primary}. Note that here in this part of the analysis, we
do not consider the decrease of active nodes due to successful
transmissions in the multi-channel transmission process. Since the
multi-channel transmission process makes the decrease of active nodes
much faster, the assumption will not affect the correctness of the
analysis.

\begin{lemma}\label{lem:primary}
Consider a round $T$ with $|A_T| \leq \mathcal{F} \cdot \log n$. There is
a constant $\mu > 0$ such that at time $T^* \leq T + \mu \cdot \mathcal{F} \cdot \log n$
there is only one active node left with high probability.
\end{lemma}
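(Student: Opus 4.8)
The plan is to transplant the analysis of Section~\ref{sec:multi} to the single primary channel, where it becomes considerably simpler thanks to the following observation: in slot~3 every active node sees the same outcome --- idle, a lone transmission, or a collision --- and in all three cases every node that remains active applies the \emph{same} update to $q$, namely it doubles precisely when the channel was idle and halves otherwise. Since $A_t\subseteq A_{t-1}\subseteq\cdots\subseteq A_0$ and every source node starts with $q_0=\zeta$, all active nodes at any round $t$ therefore hold a common value $q_t$; writing $Q_t:=\sum_{v\in A_t}q_t(v)$ we get $Q_t=|A_t|\,q_t$, and in slot~3 every active node transmits with this common probability $q_t$. We may assume $|A_T|\ge 2$, since otherwise there is nothing to prove, and, as remarked just before the lemma, we count only the active nodes that become inactive through the primary channel, which is a pessimistic assumption.

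The first building block is a single-channel analogue of Lemma~\ref{lem:safe}: there are constants $0<\beta_1<\beta_2$ and $c_1>0$ such that in every round $t$ with $|A_t|\ge 2$ and $Q_t\in[\beta_1,\beta_2]$, with probability at least $c_1$ exactly one active node transmits in slot~3, whereupon it is acknowledged in slot~4 by another active node and switches to inactive. This is immediate from the observation, since the probability of a lone transmission equals $|A_t|\,q_t(1-q_t)^{|A_t|-1}\ge Q_t\cdot 4^{-Q_t}\ge\beta_1 4^{-\beta_2}=:c_1$ by Equation~(\ref{eqn:idle}) (or Lemma~\ref{lem:q1_q0}). The second building block is the one-step drift of $Q_t$, also transparent from the observation: while $Q_t>\beta_2$ the channel is idle with probability at most $e^{-\beta_2}<1/4$ (fix $\beta_2>\ln 4$), so $q_t$ halves with probability $>3/4$ and, as $|A_t|$ is non-increasing, $Q_{t+1}\le Q_t/2$, whereas always $Q_{t+1}\le 2Q_t$; symmetrically, choosing $\beta_1$ small enough relative to $\zeta$ that $Q_t<\beta_1$ forces $q_t<\zeta/2$ (so the cap never bites), whenever $Q_t<\beta_1$ the channel is idle with probability $\ge 4^{-\beta_1}\ge 7/8$, an idle round doubles $q_t$ and resolves nobody, so $Q_{t+1}=2Q_t$ with probability $\ge 7/8$ while any non-idle round multiplies $Q_t$ by at least $1/3$. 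Since trivially $Q_T\le|A_T|\,\zeta<n$, an argument as in Lemma~\ref{lem:adaption} then produces a round $t_0\le T+O(\log n)$ with $Q_{t_0}\le\beta_2$, w.h.p.\ (unless $|A_t|$ has already reached $1$).

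From these one-step estimates the analogues of Lemmas~\ref{lem:p_decrease}--\ref{lem:stable} carry over verbatim for $Q_t$ on the primary channel, yielding: in any window of $\tau\log n$ rounds throughout which $|A_t|\ge 2$, at least a $3/4$-fraction of rounds satisfy $Q_t\le 2^{O(1)}\beta_2$ (the ``hardly going high'' part, via \emph{waves} $Q_t>\beta_2$) and at least a $3/4$-fraction satisfy $Q_t\ge 2^{-O(1)}\beta_1$ (the ``hardly going low'' part, via \emph{holes} $Q_t<\beta_1$); after enlarging $\beta_2$ and shrinking $\beta_1$ these bounds form a safe range, so at least half of the window's rounds lie in it. Finally set $W:=\max\{8\mathcal F\log n/c_1,\ \tau\log n\}$ and argue as in Lemma~\ref{lem:stable}: if $|A_t|\ge 2$ held for all $t\in[t_0,t_0+W]$, then w.h.p.\ at least $W/2$ of these rounds are good (have $Q_t$ in the safe range), in each good round a resolution occurs with conditional probability $\ge c_1$, so by the conditional Chernoff bound used in Lemma~\ref{lem:most_down} at least $c_1W/4\ge 2\mathcal F\log n\ge 2|A_T|$ resolutions occur w.h.p., which is impossible because at most $|A_T|-1$ active nodes can become inactive before $|A_t|$ reaches $1$. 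Hence w.h.p.\ $|A_t|=1$ for some $t\le t_0+W=T+O(\mathcal F\log n)$, which is the claim with $\mu$ the implied constant.

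The step I expect to need the most care is, as in Section~\ref{sec:multi}, the joint choice of the constants $\beta_1,\beta_2,\zeta$: they must be fixed together so that the safe range is non-empty and gives a constant resolution probability, the upward drift is never spoiled by the cap on $q$, and the wave/hole accounting still delivers a constant fraction of good rounds even though $|A_t|$ keeps shrinking across the window. Taking $\beta_1<\zeta$ is what makes the last point painless: even with only two active nodes the largest attainable value $2\zeta$ of $Q_t$ already exceeds $\beta_1$, so the ``going up'' estimate applies uniformly and no separate treatment of the regime of $O(1)$ active nodes is needed; moreover the analogues of Lemmas~\ref{lem:p_decrease}--\ref{lem:stable} use only $|A_t|\ge 2$ and not the exact value of $|A_t|$, so they remain valid on any sub-interval on which the active set has not yet collapsed to a single node.
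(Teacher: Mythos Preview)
Your proposal is correct and follows the paper's own approach: specialize the multi-channel analysis of Section~\ref{sec:multi} to the single primary channel ($\mathcal{F}=1$, transmission probability $q$), obtain an $O(\log n)$ adaptation phase followed by a constant-probability resolution in every safe-range round, and finish by a counting contradiction over $\mu\,\mathcal{F}\log n$ rounds. The one addition you make beyond the paper's sketch is the observation that on the primary channel all remaining active nodes always carry a \emph{common} value $q_t$, which lets you replace the balls-into-bins machinery by direct one-step drift estimates; this is a genuine simplification, but the overall decomposition (adaptation, safe range, wave/hole stability, Chernoff over good rounds) is the same as the paper's.
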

\begin{proof}
The proof for this lemma depends on a special case of the proof for Lemma~\ref{lem:multi}, 
where $\mathcal{F} = 1$ and the transmission probability refers to
$q(\cdot)$. Hence, we only give a brief sketch.

After time $T$ with $|A_T| \leq \mathcal{F} \cdot \log n$, it takes at most $O(\log n)$
rounds for the summation $\sum_{v \in A_t} q_t(v)$ to fall down 
to a range between $\beta_1$ and $\beta_2$. Here,
$\beta_1$ and $\beta_2$ are constants such that for any round $t$
with $\beta_1 \leq \sum_{v\in A_t} q_t(v) \leq \beta_2$, 
there is one active node switching to the \emph{inactive} state in the $4$th slot with constant probability.
Afterward, consider a round $T' := T + \mu \cdot \mathcal{F} \cdot \log n$ where 
$\mu > 0$ is a large enough constant. Then with high probability there is 
a time round $t < T'$ such that $|A_t| = 1$. Otherwise, during the period 
from $T$ to $T'$, with high probability there are more than $\mathcal{F}\cdot \log n$ active nodes switching 
to the \emph{inactive} state in the $4$th slot, which is impossible.
\end{proof}

\subsection{Stabilization} 

Recall in Lemma~\ref{lem:adaption}, we have proved that it takes
$O(\log n)$ rounds for a network to become ``safe'', which means
the summation $\sum_{v\in A_t} p_t(v)$ goes from its initial value to
a range between $\alpha_1\cdot \mathcal{F}$ and $\alpha_2 \cdot \mathcal{F}$ for some constants
$\alpha_1, \alpha_2 > 0$. This conclusion can be generalized to any network 
state that is not ``safe''. We describe the generalized conclusion formally in the following
Theorem~\ref{thm:stablize}.

\begin{theorem}\label{thm:stablize}
Consider the case when the number of \emph{active} nodes is always at least $\mathcal{F}\cdot \log n$.
For a round $t^*$ with $\sum_{v\in A_{t^*}} p_{t^*}(v)$ outside the safe range 
$[\alpha_1 \cdot \mathcal{F}, \alpha_2 \cdot \mathcal{F}]$, 
with high probability $\sum_{v\in A_{t}} p_{t}(v)$ will fall into the safe range in $\Phi=O(\log(\max\{\frac{p^{*}}{\mathcal{\mathcal{F}}},\frac{\mathcal{\mathcal{F}}}{p^{*}}\})+\log n)$ rounds.
\end{theorem}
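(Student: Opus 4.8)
The plan is to prove Theorem~\ref{thm:stablize} as a two-sided strengthening of Lemma~\ref{lem:adaption}: the bound $\sum_{v\in A_0}p_0(v)<n$ is the only place where $n$ enters the length estimate there, and replacing it by the actual value $p^{*}:=\sum_{v\in A_{t^{*}}}p_{t^{*}}(v)$ turns the $O(\log n)$ bound into an $O(\log n+\log(p^{*}/\mathcal{F}))$ bound on one side and, symmetrically via Lemma~\ref{lem:most_up} in place of Lemma~\ref{lem:most_down}, into an $O(\log n+\log(\mathcal{F}/p^{*}))$ bound on the other. So I would first fix the constant $\gamma$ from Lemmas~\ref{lem:most_down} and \ref{lem:most_up}, choose $\Phi=c\,(\log n+\log\max\{p^{*}/\mathcal{F},\mathcal{F}/p^{*}\})$ for a large enough constant $c$ (so in particular $\Phi\ge\gamma\log n$), and split into the cases $p^{*}>\alpha_2\mathcal{F}$ and $p^{*}<\alpha_1\mathcal{F}$, which together are exactly ``outside the safe range''. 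Throughout, the hypothesis $|A_t|\ge\mathcal{F}\log n$ is what licenses the use of these lemmas and of the deactivation counts below, and I will repeatedly use from Lemma~\ref{lem:stable} that $\alpha_2=\alpha'_2\cdot2^{10}$, $\alpha_1=\alpha'_1/2^{10}$, with $\alpha'_1$ close to $0.01$ and $\alpha'_2>1$.

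For the high case, let $t'$ be the first round $\ge t^{*}$ with $\sum_{v\in A_{t'}}p_{t'}(v)<\alpha'_2\mathcal{F}$. If $t'-t^{*}>\Phi$, then every one of the consecutive rounds $t^{*},\dots,t^{*}+\Phi-1$ has total transmission probability at least $\alpha'_2\mathcal{F}$, so Lemma~\ref{lem:most_down} applies to them and, with probability $1-O(n^{-1})$, at least $\tfrac34\Phi$ of the per-round ratios $X_t=\sum_{v\in A_{t+1}}p_{t+1}(v)/\sum_{v\in A_t}p_t(v)$ are at most $3/4$ while the remaining ones are automatically at most $2$ (since $A_{t+1}\subseteq A_t$ and $p_{t+1}(v)\le2p_t(v)$); hence $\sum_{v\in A_{t^{*}+\Phi}}p_{t^{*}+\Phi}(v)\le p^{*}\cdot(27/32)^{\Phi/4}<\alpha'_2\mathcal{F}$ once $c$ is large enough, contradicting the choice of $t'$. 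So $t'\le t^{*}+\Phi$ with high probability, and it remains only to see that round $t'$ already lies in the safe range. For this I would use that a single round changes $\sum_{v\in A_t}p_t(v)$ upward by at most a factor $2$ and loses at most the weight of $\mathcal{F}+1$ nodes (at most one successful transmitter per channel in Slot~2, one in Slot~4), each of weight $\le\zeta$; thus $\tfrac12\alpha'_2\mathcal{F}-(\mathcal{F}+1)\zeta\le\sum_{v\in A_{t'}}p_{t'}(v)<\alpha'_2\mathcal{F}$, and since $\alpha'_2>1$ and $\zeta$ is a small enough constant this interval sits strictly inside $[\alpha_1\mathcal{F},\alpha_2\mathcal{F}]$, so there is no risk of overshooting the whole band in one step.

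The low case is the mirror image, run with Lemma~\ref{lem:most_up}: taking $t'$ to be the first round $\ge t^{*}$ with $\sum_{v\in A_{t'}}p_{t'}(v)\ge\alpha'_1\mathcal{F}$, the assumption $t'-t^{*}>\Phi$ makes all earlier rounds satisfy $\sum_{v\in A_t}p_t(v)<\alpha'_1\mathcal{F}$, so Lemma~\ref{lem:most_up} forces $\sum_{v\in A_{t^{*}+\Phi}}p_{t^{*}+\Phi}(v)\ge p^{*}\cdot(32/27)^{\Phi/4}>\alpha'_1\mathcal{F}$ with high probability for $c$ large, again a contradiction; and at the crossing the summation lies in $[\alpha'_1\mathcal{F},2\alpha'_1\mathcal{F})\subseteq[\alpha_1\mathcal{F},\alpha_2\mathcal{F}]$, using $2\alpha'_1>\alpha_1$ and $2\alpha'_1<\alpha_2$. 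Combining the two cases, in the high case $\max\{p^{*}/\mathcal{F},\mathcal{F}/p^{*}\}=p^{*}/\mathcal{F}$ (as $p^{*}>\alpha_2\mathcal{F}>\mathcal{F}$) and in the low case it equals $\mathcal{F}/p^{*}$, so both give the stated $\Phi=O(\log(\max\{p^{*}/\mathcal{F},\mathcal{F}/p^{*}\})+\log n)$.

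I expect the main obstacle to be the boundary bookkeeping in the low regime. When $p^{*}$ is much smaller than $\mathcal{F}$, the total transmission probability can early on be comparable to the $O(\zeta\mathcal{F})$ of weight a round may shed through deactivations, so the crude ``factor $2$'' statement above and the implicit lower bound on the bad-round ratios $X_t$ needed to carry the geometric-ascent estimate of Lemma~\ref{lem:most_up} through require the same care as in the proof of Lemma~\ref{lem:p_increase}: one argues that, because by Lemma~\ref{lem:q1_q0} the expected number of successful transmissions (hence of Slot~2 deactivations) is only $O(\sum_{v\in A_t}p_t(v))$, the expected weight lost per round is an $O(\zeta)$-fraction of $\sum_{v\in A_t}p_t(v)$ as soon as that summation exceeds a small constant, which keeps each $X_t$ bounded below by, say, $1/4$ with constant probability. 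A secondary point worth stating explicitly is that the $\log n$ summand in $\Phi$ is not an artifact of the convergence rate but is exactly what the Chernoff step inside Lemmas~\ref{lem:most_down} and \ref{lem:most_up} needs to deliver a $1-O(n^{-1})$ rather than merely constant success probability, which is why $\Phi$ must dominate both $\gamma\log n$ and the geometric-convergence requirement rather than being set to either one alone.
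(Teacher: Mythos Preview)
Your proposal is correct and follows essentially the same route as the paper: split into the high and low cases, invoke Lemma~\ref{lem:most_down} (resp.\ Lemma~\ref{lem:most_up}) on the block of rounds before the first crossing, and read off the geometric contraction factor $(27/32)^{\Phi/4}$ (resp.\ growth $(32/27)^{\Phi/4}$), with the $\log n$ summand forced by the Chernoff step inside those lemmas. Your crossing-time/no-overshoot bookkeeping and your flag on the very-low-$p^{*}$ regime are in fact more careful than the paper's own proof, which simply asserts the factor-$1/2$ lower bound on bad rounds in the low case without discussing deactivation losses.
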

\begin{proof}
Recall that in the proof of Lemma~\ref{lem:adaption}, in order to show that 
the summation $\sum_{v\in A_t} p_t(v)$ goes below $\alpha_2\cdot \mathcal{F}$, we considered 
$T:= 4\cdot T'$ rounds such that $T'\geq \gamma\log n$ for a large enough constant $\gamma$, during which there are $3\cdot T'$ rounds with a decrease
of $\sum_{v\in A_t} p_t(v)$ by a factor $3/4$ (by Lemma~\ref{lem:most_down}) and $T'$ rounds with an increase of $\sum_{v\in A_t} p_t(v)$ by a factor at most $2$. 
Then, after these $T$ rounds, the summation $\sum_{v\in A_t} p_t(v)$ 
will be decreased by a factor of $(27/32)^{T'}$ with high probability. 
Since the network is initiated with $\sum_{v\in A_0} p_0(v) \leq \zeta \cdot n$, we know that it is enough to 
set $T := O(\log n)$ for the network to become ``safe''. 

In a similar approach, it is easy to show that for any round $t^*$ with $p^* := \sum_{v\in A_{t^*}} p_{t^*}(v) > \alpha_2 \cdot \mathcal{F}$,
by the round $t' := t^*+\max\{4 \cdot \log (32 \cdot p^*/(27\cdot \alpha_2\cdot \mathcal{F})), 4\gamma\log n\}$, the summation
$\sum_{v\in A_{t'}} p_{t'}(v)$ becomes smaller than $\alpha_2 \cdot \mathcal{F}$ with high probability.

For the case that $p^* < \alpha_1 \cdot \mathcal{F}$, the proof idea is similar. Note that 
during $T:= 4\cdot T'$ rounds with $\sum_{v\in A_t} p_t(v) < \alpha_1 \cdot \mathcal{F}$, where $T'\geq \gamma\log n$ for a large enough constant $\gamma$,  
there are $3\cdot T'$ rounds with an increase of $\sum_{v\in A_t} p_t(v)$ by a factor $4/3$ 
(Lemma~\ref{lem:most_up}) and $T'$ rounds with a decrease of $\sum_{v\in A_t} p_t(v)$ by a factor $1/2$. 
Overall, after these $T$ rounds, the summation $\sum_{v\in A_t} p_t(v)$ 
will be increased by a factor of $(32/27)^{T'}$ with high probability. Hence, by setting $t' := t^* + \max\{4 \cdot \log (27 \cdot \alpha_1\cdot \mathcal{F}/(32\cdot p^*)), 4\gamma\log n\}$, the summation
$\sum_{v\in A_{t'}} p_{t'}(v)$ becomes larger than $\alpha_1 \cdot \mathcal{F}$ by round $t'$ with high probability.
\end{proof}

\section{Conclusion} \label{sec:con}

In this paper, we considered the information exchange problem of $k$ source nodes 
in single-hop multiple-channel networks of $n$ nodes. With $\mathcal{F}$ 
available channels and collision detection, we proposed a protocol that 
solves the information exchange problem in $O(k/\mathcal{F} + \mathcal{F}\cdot \log n)$ rounds, 
with high probability. Our algorithm is uniform in $n$ and $k$, which is
the first known uniform algorithm for information exchange in multi-channel
networks. And the proposed protocol is asymptotically optimal when $k$
is large.

In our protocol, when detecting transmissions, a node will decrease its transmission probability to avoid collisions.
Then if there exist jamming signals on a channel, an analysis
similar to that introduced in this paper would show that even for the case when
jamming only affects a constant fraction of the available channels, the
total transmission probability (i.e.\ $\sum_{v\in A_t} p_t(v)$)
may tend to become very small. The affects the primary channel strategy 
even more significantly, since a fixed channel may be jammed all the time.
This problem 
motivates us to consider jamming-resilience of the proposed protocol
in the future.

\bibliographystyle{abbrv}
\bibliography{sigproc}  

\begin{thebibliography}{10}

\bibitem{AntaM10}
A.~F. Anta and M.~A. Mosteiro.
\newblock Contention resolution in multiple-access channels: \emph{k}-selection
  in radio networks.
\newblock In {\em Computing and Combinatorics, 16th Annual International
  Conference, {COCOON} 2010, Nha Trang, Vietnam, July 19-21, 2010.
  Proceedings}, pages 378--388, 2010.

\bibitem{Fernandez2013}
A.~F. Anta, M.~A. Mosteiro, and J.~R. Mu\~{n}oz.
\newblock Unbounded contention resolution in multiple-access channels.
\newblock In {\em Proceedings of the 25th international conference on
  Distributed computing}, DISC'11, pages 225--236, Berlin, Heidelberg, 2011.
  Springer-Verlag.

\bibitem{AwerbuchRS08}
B.~Awerbuch, A.~W. Richa, and C.~Scheideler.
\newblock A jamming-resistant {MAC} protocol for single-hop wireless networks.
\newblock In {\em Proceedings of the Twenty-Seventh Annual {ACM} Symposium on
  Principles of Distributed Computing, {PODC} 2008, Toronto, Canada, August
  18-21, 2008}, pages 45--54, 2008.

\bibitem{Capetanakis79}
J.~Capetanakis.
\newblock Tree algorithms for packet broadcast channels.
\newblock {\em {IEEE} Transactions on Information Theory}, 25(5):505--515,
  1979.

\bibitem{CMK14}
B.~S. Chlebus, G.~D. Marco, and D.~R. Kowalski.
\newblock Scalable wake-up of multi-channel single-hop radio networks.
\newblock In {\em Principles of Distributed Systems - 18th International
  Conference, {OPODIS} 2014, Cortina d'Ampezzo, Italy, December 16-19, 2014.
  Proceedings}, pages 186--201, 2014.

\bibitem{DaumGGKN13}
S.~Daum, M.~Ghaffari, S.~Gilbert, F.~Kuhn, and C.~C. Newport.
\newblock Maximal independent sets in multichannel radio networks.
\newblock In {\em {ACM} Symposium on Principles of Distributed Computing,
  {PODC} '13, Montreal, QC, Canada, July 22-24, 2013}, pages 335--344, 2013.

\bibitem{Daum2012a}
S.~Daum, S.~Gilbert, F.~Kuhn, and C.~Newport.
\newblock Leader election in shared spectrum radio networks.
\newblock In {\em Proceedings of the 2012 ACM symposium on Principles of
  distributed computing}, pages 215--224. ACM, 2012.

\bibitem{DKN12}
S.~Daum, F.~Kuhn, and C.~C. Newport.
\newblock Efficient symmetry breaking in multi-channel radio networks.
\newblock In {\em Distributed Computing - 26th International Symposium, {DISC}
  2012, Salvador, Brazil, October 16-18, 2012. Proceedings}, pages 238--252,
  2012.

\bibitem{DGGN07}
S.~Dolev, S.~Gilbert, R.~Guerraoui, and C.~C. Newport.
\newblock Gossiping in a multi-channel radio network.
\newblock In {\em Distributed Computing, 21st International Symposium, {DISC}
  2007, Lemesos, Cyprus, September 24-26, 2007, Proceedings}, pages 208--222,
  2007.

\bibitem{DGGN08}
S.~Dolev, S.~Gilbert, R.~Guerraoui, and C.~C. Newport.
\newblock Secure communication over radio channels.
\newblock In {\em Proceedings of the Twenty-Seventh Annual {ACM} Symposium on
  Principles of Distributed Computing, {PODC} 2008, Toronto, Canada, August
  18-21, 2008}, pages 105--114, 2008.

\bibitem{GGNT12}
M.~Ghaffari, S.~Gilbert, C.~C. Newport, and H.~Tan.
\newblock Optimal broadcast in shared spectrum radio networks.
\newblock In {\em Principles of Distributed Systems, 16th International
  Conference, {OPODIS} 2012, Rome, Italy, December 18-20, 2012. Proceedings},
  pages 181--195, 2012.

\bibitem{GilbertGKN09}
S.~Gilbert, R.~Guerraoui, D.~R. Kowalski, and C.~C. Newport.
\newblock Interference-resilient information exchange.
\newblock In {\em {INFOCOM} 2009. 28th {IEEE} International Conference on
  Computer Communications, Joint Conference of the {IEEE} Computer and
  Communications Societies, 19-25 April 2009, Rio de Janeiro, Brazil}, pages
  2249--2257, 2009.

\bibitem{Hayes1978}
J.~Hayes.
\newblock An adaptive technique for local distribution.
\newblock {\em Communications, IEEE Transactions on}, 26(8):1178--1186, Aug
  1978.

\bibitem{HolzerLPW12}
S.~Holzer, T.~Locher, Y.~A. Pignolet, and R.~Wattenhofer.
\newblock Deterministic multi-channel information exchange.
\newblock In {\em 24th {ACM} Symposium on Parallelism in Algorithms and
  Architectures, {SPAA} '12, Pittsburgh, PA, USA, June 25-27, 2012}, pages
  109--120, 2012.

\bibitem{HolzerPSW11}
S.~Holzer, Y.~A. Pignolet, J.~Smula, and R.~Wattenhofer.
\newblock Time-optimal information exchange on multiple channels.
\newblock In {\em FOMC'11, The Seventh {ACM} {SIGACT/SIGMOBILE} International
  Workshop on Foundations of Mobile Computing (part of {FCRC} 2011), San Jose,
  CA, USA, June 9, 2011, Proceedings}, pages 69--76, 2011.

\bibitem{Kowalski2005}
D.~R. Kowalski.
\newblock On selection problem in radio networks.
\newblock In {\em Proceedings of the Twenty-Fourth Annual {ACM} Symposium on
  Principles of Distributed Computing, {PODC} 2005, Las Vegas, NV, USA, July
  17-20, 2005}, pages 158--166, 2005.

\bibitem{Martel1994}
C.~U. Martel.
\newblock Maximum finding on a multiple access broadcast network.
\newblock {\em Inf. Process. Lett.}, 52(1):7--15, 1994.

\bibitem{Mikhailov2005}
V.~Mikhailov and B.~Tsybackov.
\newblock Free synchronous packet access in a broadcast channel with feedback.
\newblock {\em Problemy Peredachi Inform}, 14(4), 2005.

\bibitem{ShiHYWL12}
W.~Shi, Q.~Hua, D.~Yu, Y.~Wang, and F.~C.~M. Lau.
\newblock Efficient information exchange in single-hop multi-channel radio
  networks.
\newblock In {\em Wireless Algorithms, Systems, and Applications - 7th
  International Conference, {WASA} 2012, Yellow Mountains, China, August 8-10,
  2012. Proceedings}, pages 438--449, 2012.

\bibitem{Wang2014}
Y.~Wang, Y.~Wang, D.~Yu, J.~Yu, and F.~Lau.
\newblock Information exchange with collision detection on multiple channels.
\newblock {\em Journal of Combinatorial Optimization}, pages 1--18, 2014.

\bibitem{Willard1986}
D.~E. Willard.
\newblock Log-logarithmic selection resolution protocols in a multiple access
  channel.
\newblock {\em SIAM J. Comput.}, 15(2):468--477, May 1986.

\bibitem{YYWYL14}
Y.~Yan, D.~Yu, Y.~Wang, J.~Yu, and F.~Lau.
\newblock Bounded information dissemination in multi-channel wireless networks.
\newblock {\em Journal of Combinatorial Optimization}, pages 1--17, 2014.

\end{thebibliography}
\ignore{
\appendix

\section{Headings in Appendices}
}
\end{document}